\title{Shortest $k$-Disjoint Paths via Determinants}
\author[1,2]{Samir Datta}
\author[1,3]{Siddharth Iyer}
\author{Raghav Kulkarni}
\author{Anish Mukherjee}
\affil{Chennai Mathematical Institute, India\\   \texttt{sdatta@cmi.ac.in,kulraghav@gmail.com,anish@cmi.ac.in}}
\affil[2]{UMI ReLaX}
\affil[3]{Birla Institute of Technology and Science - Pilani, India\\  \texttt{sviyer97@gmail.com}}
 \newtheorem{theorem}{Theorem}
 \newtheorem{lemma}[theorem]{Lemma}
 \newtheorem{definition}[theorem]{Definition}
\newtheorem{observation}[theorem]{Observation}
\newtheorem{proposition}[theorem]{Proposition}
 \newtheorem{remark}[theorem]{Remark}
\newtheorem{claim}[theorem]{Claim}
\newtheorem{fact}{Fact}
\newcommand{\Log}{\mbox{{\sf L}}}
\newcommand{\Pt}{\mbox{{\sf P}}}
\newcommand{\qP}{\mbox{{\sf quasi-P}}}
\newcommand{\NP}{\mbox{{\sf NP}}}
\newcommand{\Gl}{\mbox{{\sf GapL}}}
\newcommand{\NC}{\mbox{{\sf NC}}}
\newcommand{\bfNC}{\mbox{{\textbf{\textsf{NC}}}}}
\newcommand{\bfRNC}{\mbox{{\textbf{\textsf{RNC}}}}}
\newcommand{\RNC}{\mbox{{\sf RNC}}}
\newcommand{\TCz}{\mbox{{\sf TC}$^0$}}
\newcommand{\NCo}{\mbox{{\sf NC}$^1$}}
\newcommand{\NCt}{\mbox{{\sf NC}$^2$}}
\newcommand{\AxisCross}{\mbox{{\sf AxisCross}}}
\newcommand{\CD}{\mbox{{\sf CD}}}
\newcommand{\Offset}{\sf O}
\providecommand{\nc}{\newcommand}
\newcommand  {\myclass} [1]  {\ensuremath{\textsc{#1}}}
\nc{\dynac}{\ensuremath{\myclass{DynAC}^0}\xspace}
\nc{\adom}{\ensuremath{\text{adom}}}
\nc{\subs}{\subseteq}
\newcommand{\ang}[1]{\left\langle #1 \right\rangle}
\newcommand{\mylen}[1]{{\mathbf{len}}{(#1)}}
\newcommand{\uf}[4]{
  \@ifmtarg{#4}{
    \ensuremath{\phi^{#1}_{#2}(#3)}
   }{
    \ensuremath{\phi^{#1}_{#2}(#3; #4)}
  }
}
\newcommand{\M}{$\mathbf{M}$}
\newcommand{\comment}[1]{}
\newcommand{\mcomment}[3]{}
\nc{\sdm}[1]{\mcomment{SD}{Samir}{#1}}
\nc{\rkm}[1]{\mcomment{RK}{Raghav}{#1}}
\nc{\amm}[1]{\mcomment{AM}{Anish}{#1}}
\nc{\sm}[1]{\mcomment{SI}{Siddharth}{#1}}
\begin{document}
\maketitle
\begin{abstract}
The well-known $k$-disjoint path problem ($k$-DPP) asks for pairwise 
vertex-disjoint paths between $k$ specified pairs of vertices  $(s_i, t_i)$ 
in a given graph, if they exist. The decision version of the shortest $k$-DPP 
asks for the length of the shortest (in terms of total length) such paths. 
Similarly the search and counting versions ask for one such and the number of such shortest set of paths, respectively.

We restrict attention to the shortest $k$-DPP instances on
undirected planar graphs where all sources and sinks lie on a single face 
    or on a pair of faces. 
We provide efficient sequential and parallel algorithms for the search 
versions of the problem answering one of the main open questions raised 
by Colin de Verdiere and Schrijver \cite{VS11} for the general one-face
problem. We do so by providing a randomised \NCt\ algorithm along with an 
$O(n^{\omega})$ time randomised sequential algorithm. We also obtain
deterministic algorithms with similar resource bounds for the counting and search versions.

In contrast, previously, only the sequential complexity of 
decision and search versions of the ``well-ordered'' case has
been studied. For the one-face case, sequential versions of our 
routines have better running times for constantly many terminals.
In addition, the earlier best known sequential algorithms (e.g. 
Borradaile et al. \cite{BNZ}) were randomised while ours are also
deterministic.

The algorithms are based on a bijection between a shortest $k$-tuple of 
disjoint paths in the given graph and cycle covers in a related digraph. 
This allows us to non-trivially modify established techniques relating
counting cycle covers to the determinant. We further need to do a controlled
inclusion-exclusion to produce a polynomial sum of determinants such that 
all ``bad'' cycle covers cancel out in the sum allowing us to count ``good''
cycle covers.
\end{abstract}
\newpage
\section{Introduction}\label{sec:intro}
\subsection{The $k$-disjoint path problem}
The $k$-Disjoint Path Problem, denoted by $k$-DPP, is a well-studied problem in algorithmic graph theory with many applications in transportation networks, VLSI-design and most notably in the algorithmic graph minor theory (see for instance \cite{KS} and references therein). The $k$-DPP can be formally defined as follows:
Given a (directed/undirected) graph $G = (V, E)$ together with $k$ specified pairs of terminal vertices $(s_i, t_i)$ for $i \in [k]$, find $k$ pairwise vertex-disjoint paths $P_i$ from $s_i$ to $t_i$, if they exist. One may similarly define an edge-disjoint variant (EDPP) of the problem. 
We will mainly focus on the vertex-disjoint variant in this paper though several of our results are translated to the edge-disjoint version. The Shortest $k$-DPP asks to find $k$ pairwise vertex-disjoint paths of minimum total length. We consider the following  variants of Shortest $k$-DPP\sm{Maybe we can also talk about our hardness result here, at least for the one-face case}:
\begin{enumerate}
\item Decision: given $w$, decide if there is a set of $k$-disjoint paths of length at most $w$.
\item Construction/Search: construct one set of shortest $k$-disjoint paths. 
\item Counting: count the number of shortest $k$-disjoint paths.
\end{enumerate}
\subsection{Finding $k$-disjoint paths : Historical overview}
The existence as well as construction versions of $k$-DPP are well-studied in general as well as planar graphs. The problem in general directed graphs is \NP -hard even for $k = 2$ \cite{FHW80}. DPP is one of Karp's \NP-hard problems \cite{RK} and remains so when restricted to undirected planar graphs \cite{L75} and \cite{MiddendorfP93} extends this to EDPP as well. In fact, EDPP remains \NP-hard even on planar undirected graphs when all the terminals lie on a single face~\cite{Schwarzler09}.The problem of finding two disjoint paths, one of which is a shortest path, is also \NP-hard \cite{Eil}.

The existence of a One/Two-face $k$-DPP was studied in \cite{RobertsonS86a} as part of the celebrated \emph{Graph Minors} series. This was extended (for fixed $k$) to graphs on a surface \cite{RobertsonS88} and general undirected graphs \cite{RS} in later publications in the same series \cite{RS}. A solution to this problem was central to the Graph Minors Project and adds to the importance of the corresponding optimization version. Suzuki et al. \cite{SAN90} gave linear time\sm{Also mention linear time algorithm for disjoint trees and paths by Reed, Robertson, Schjriver and Seymour} and $O(n \log n)$ time algorithms for the One-face and Two-face case, respectively and \cite{SYN90} gave \NC\ algorithms for both. In directed graphs, for fixed $k$ polynomial time algorithms are known when the graph is either planar \cite{Sch94} or acyclic \cite{FHW80}.

Though there are recent exciting works on planar restrictions of the problem (e.g. \cite{ChuzhoyKL16}\sm{also cite paper on approximating maxDPP in bounded treewidth graphs}), the One-face or Two-face setting might appear on first-look to be a bit restrictive. However, the One-face setting occurs naturally in the context of routing problems for VLSI circuits where the graph is a two dimensional grid and all the terminals lie on the outer face. Relaxations of the one-face setting become intractable, e.g., ``only all source-terminals on one face'' is hard to even approximate under a reasonable complexity assumption ($\NP \neq \qP$ \cite{ChuzhoyKN16}). 
\subsection{Shortest $k$-DPP : Related work}
The optimization problem is considerably harder. A version of the problem, called \emph{length-bounded} DPP, where the each of the path need to have length bounded by some integer $\ell$. This problem is \NP -hard in the strong sense even in the One-face case for non-fixed $k$ \cite{vd02}. For the shortest $k$-DPP, where we want to minimise the sum of the lengths of the paths, very few instances are known to be solvable in polynomial time. For general undirected graphs, very recently, Bj\"{o}rklund and Husfeldt \cite{BH14} have shown that shortest $2$-DPP admits a randomised polynomial time algorithm. The deterministic polynomial time bound for the same – to this date – remains an intriguing open question.

For planar graphs, Colin de Verdi\`{e}re and Schrijver \cite{VS11} and Kobayashi and Sommer \cite{KS} give polynomial time algorithms for shortest $k$-DPP in some special cases. \cite{VS11} gives an $O(kn\log{n})$ time algorithm works when the source and sink vertices are incident on different faces of the graph and allows $k$ to be a part of the input. \cite{KS} gives $O(n^4\log{n})$ and $O(n^3\log{n})$ time algorithms when the terminal vertices are on one face for $k \le 3$ or on two faces for $k = 2$, respectively. For arbitrary $k$, linear time algorithm is known for bounded tree-width graphs \cite{Sch94a}. Polynomial time algorithms are also known through reducing the problems to the minimum cost flow problem when all the sources (or sinks) coincides or when the terminal vertices are in parallel order \cite{vd02}. 

In \cite{VS11} authors ask about the existence of a polynomial time algorithm provided all the terminals are on a common face, for which we give an efficient deterministic algorithm for any fixed $k$. The only progress on this was made in \cite{BNZ} where an $O(kn^5)$ time randomised algorithm is presented when corresponding sources and sinks are in series on the boundary of a common face. All the previous one-face planar results are strictly more restrictive or orthogonal to our setting and our sequential algorithms are more efficient (for fixed $k$). We are able to tackle the counting version that is typically harder than the decision version. 
Also, to the best of our knowledge, none of the previous works have addressed the parallel complexity of these problems.
\subsection{Our results and techniques}\label{subsec:results}
We resolve an open question in \cite{VS11} and also provide a simpler algorithm for the problem considered in \cite{VS11} while ensuring that our algorithms are efficiently parallelizable:
\begin{theorem}\label{thm:main}
Given an undirected planar graph $G$ with either $k$ pairs of source and sink vertices lying on one face or $k$ source vertices on one face and $k$ sink vertices on another, then
\begin{enumerate}
\item count of all shortest $k$-disjoint paths between the terminals can be found in \NCt,
\item the length of a shortest set $k$-disjoint paths between the terminals can be found in \NCt, and
\item a shortest set of $k$-disjoint paths between the terminals can be found in $\Pt\cap\RNC$.
\end{enumerate}
\end{theorem}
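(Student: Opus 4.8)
The plan is to establish all three parts through a single combinatorial reduction that turns the shortest $k$-disjoint path problem into a problem of \emph{counting cycle covers} in an auxiliary digraph, and then to evaluate that count by means of determinants.

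First I would build, from the input instance $(G,(s_i,t_i))$, a directed graph $H$ in which the paths are ``closed into cycles'': replace each undirected edge of $G$ by a pair of arcs, add a return arc from each sink $t_i$ back to its source $s_i$, and equip the remaining vertices with self-loops, so that any system of $k$ vertex-disjoint $s_i$--$t_i$ paths closes up into a cycle cover of $H$. To record lengths I would attach an indeterminate $x$ to each original edge, so that a path of length $\ell$ contributes $x^{\ell}$ and a whole disjoint system of total length $L$ contributes $x^{L}$; the count of shortest systems is then the coefficient of the lowest-degree monomial of the resulting generating polynomial over $\mathbb{Z}[x]$. The topological hypothesis --- all terminals on one face, or the sources on one face and the sinks on another --- fixes the cyclic (resp.\ linear) order of the terminals and forces the non-crossing structure of any disjoint system, so that the ``good'' cycle covers of $H$ are exactly the shortest disjoint path systems of $G$, pinning down their cycle structure.

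Next I would compute the \emph{signed} count of cycle covers via the permutation expansion of a determinant: each monomial of $\det$ of the arc-weight matrix of $H$ is a signed product over a disjoint union of cycles, i.e.\ a cycle cover. The obstacle is that the determinant also enumerates many ``bad'' covers --- those using spurious short cycles among the return arcs, or covering the vertices incorrectly --- and these carry signs that need not cancel on their own. The heart of the argument is therefore a \emph{controlled inclusion--exclusion}: I would write the desired count of good covers as an explicit alternating sum of polynomially many determinants of matrices obtained by forcing or forbidding selected arc patterns, arranged so that every bad cover is annihilated by an equal-and-opposite contribution in a paired term while each good cover survives with coefficient $+1$. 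Planarity together with the one/two-face placement is precisely what keeps the family of bad patterns structured and small enough that polynomially many determinants suffice. Since the determinant over $\mathbb{Z}[x]$ with polynomially bounded degree lies in \NCt\ by the standard parallel determinant algorithms, and a polynomial-size signed sum of such determinants is again in \NCt, reading off the lowest-degree coefficient yields the count in \NCt\ (part~1) and its exponent yields the minimum length (part~2).

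Finally, for the search version I would bootstrap from the counting routine. Sequentially this is a routine self-reduction: process the arcs one at a time, using the polynomial-time counting subroutine to test whether a shortest system still exists after each tentative deletion or forcing, committing greedily; this runs in \Pt. In parallel one cannot iterate, so I would instead invoke the isolation lemma, perturbing the arc weights by small random values so that with high probability the optimal (shortest, then minimum perturbed-weight) system is unique; membership of each arc in this unique system can then be decided in parallel by comparing determinant values with and without that arc forced, recovering the whole solution in \RNC. Together these give the $\Pt\cap\RNC$ bound of part~3. I expect the controlled cancellation step to be the principal difficulty, as it requires a precise classification of the bad cycle covers and a sign-matching that is specific to the planar one/two-face geometry.
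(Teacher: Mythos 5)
Your outline of the one-face case tracks the paper's proof: the same closure of paths into cycle covers via return arcs and self-loops, the same generating polynomial in $x$ read off from a determinant, the same greedy self-reduction for the \Pt\ search, and the same isolation-lemma perturbation for the \RNC\ search. Two caveats even there: the inclusion--exclusion runs over the set of outerplanar pairings of the $2k$ terminals, of which there are roughly $4^k$ (a Catalan number), so ``polynomially many determinants'' is accurate only for fixed $k$; and the fact that makes the alternating sum terminate is not planarity alone but the antisymmetry of the compatibility relation --- for a compatible pair $\ang{M,M'}$ with $M\neq M'$ one has $\mylen{M}<\mylen{M'}$, so the pairings form a DAG ending at the unique parallel pairing. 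That length-ordering claim is the structural lemma your sketch would need to state and prove.

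The genuine gap is the two-face case. You assert that the same controlled inclusion--exclusion handles both terminal placements, but in the two-face setting compatibility is no longer antisymmetric: two distinct pairings can each yield routings that appear in the other's determinant, so the pairings cannot be arranged into a DAG and the alternating sum does not telescope --- there is no minimal pairing at which the recursion bottoms out. The paper replaces cancellation by pruning: it routes a dual path $P_{axis}$ between the two faces, weights the primal arcs crossing it by $y^{\pm1}$, and shows via a winding-number argument (borrowed from Robertson--Seymour) that a cycle cover is pure for the input pairing $M$ exactly when its net number of axis crossings is congruent to a fixed offset $\Offset_M$ modulo $k$; the good covers are then read off as the coefficients of the correct power of $y$ modulo $k$ in a bivariate determinant. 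Without some such topological invariant separating good covers from bad ones, your scheme has no mechanism to eliminate the bad covers in the two-face case, so that half of each of the three claims is unproven as written.
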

Our algorithms extend to a variant of the edge-disjoint version of the problem (for decision and search) by known reductions to the vertex disjoint case (see Lemma~\ref{lem:reduction} in Section~\ref{sec:edpp}) and for weighted graphs where each edge is assigned a weight which is polynomially bounded in the number of vertices. We obtain running times independent of $k$ when the terminal vertices on the faces are in parallel order. We summarize our main results in Table~\ref{tab:res}.
\begin{table}
\centering
\begin{tabular}{|c|c|c|c|c|}\hline
\multirow{2}{*}{Problem} & \multirow{2}{*}{Variant}& \multicolumn{2}{c|}{Sequential} & \multirow{2}{*}{Parallel}\\\cline{3-4}
 &	& Deterministic      & Randomised &  \\ \cline{1-5}
\multirow{3}{*}{One-face General}&Decision & $\mathbf{4^kn^{\omega}}$    &   & \bfNC   \\
 & Counting  & $\mathbf{4^kn^{\omega+1}}$    &   & \bfNC   \\
& Search & $\mathbf{4^kn^{\omega+2}}$    & $\mathbf{4^kn^{\omega}}$  & \bfRNC  \\\cline{1-5}
\multirow{3}{*}{Two-face Parallel}&Decision & $\mathbf{kn^{\omega}}$    &   & \bfNC   \\
 & Counting  & $\mathbf{kn^{\omega+1}}$    &   & \bfNC   \\
& Search & $\mathbf{kn^{\omega+2}}$ ($kn \log n$\cite{VS11})  & $\mathbf{kn^{\omega}}$  &  \bfRNC   \\\cline{1-5}
\end{tabular}
\caption{Summary of Results. The dependence on $k$ and $n$ of our results (in $\mathbf{bold}$) is emphasized. Note that $\omega$ is the matrix multiplication constant.}\label{tab:res}
\end{table}
The proof of Point 1 in Theorem~\ref{thm:main} depends on four ideas:
\begin{enumerate}
\item[A.] An injection from $k$ disjoint paths to cycle covers in a 
related graph for the general case.
\item[B.] The injection above reduces to a bijection in the parallel case. 
(Lemma~\ref{lem:bij})
\item[C.] An identity involving telescoping sums to simplify the count of 
    $k$-disjoint paths (Lemma~\ref{lem:cancel})
\item[D.] A pruning of the cycle covers in the Two-face case based on 
    topological considerations.
\end{enumerate}
Point 2 is an immediate consequence of Point 1.
For the \Pt-bound in Point 3 we use a greedy method with a counting oracle from
Point 1. For the \RNC-algorithm we instead use isolation \emph{a la} \cite{MVV}. We sketch these ideas in more detail below.
\subsubsection*{Proof Sketch}  
Throughout the following sketch we talk about pairings which are essentially a collection of $k$ source-sink pairs, though not necessarily the same one which was specified in the input. We refer to this input pairing by $M_0$.
\begin{enumerate}
\item \textbf{One Face Case.} We first convert the given undirected planar graph into a directed one such
that each set of disjoint paths between the source-sink pairs in $M_0$
corresponds to directed cycle covers and this correspondence preserves
weights (Lemma~\ref{lem:inj}).
In this process, we might introduce ``bad'' cycle covers corresponding to 
pairings of terminals which are not required and they need to be cancelled out. 
Each ``bad'' cycle cover which was included, can be mapped to a unique pairing, say $M_1$.
Since the ``bad'' cycle cover occurs in $M_0$ as well as $M_1$ we can cancel it out 
by adding or subtracting the determinant of $M_1$ from $M_0$. 
However, $M_1$ can introduce further ``bad'' cycle covers which again need to be cancelled. 
We show that all the ``bad'' cycle covers like this can be cancelled 
by adding or subtracting determinants exactly like in an inclusion-exclusion formula over a DAG (Lemma~\ref{lem:cancel}).
This process terminates with the so called ``parallel'' pairings (where
the weight-preserving correspondence between $k$-disjoint paths and 
cycle covers with $k$ non-trivial cycles is a bijection) (Lemma~\ref{lem:bij}).
\item \textbf{Two Face Case.} The inclusion-exclusion formula exploited the topology of the one face case which is not present in the two face case. Here, this approach breaks down as the pairings can not be put together as a DAG. We resolve this for a special case when all sources are on one face and all sinks are on the other by using a topological artifice to prune out pairings which cause cycles. 
\item \textbf{Counting.} The cycle covers in a graph can be counted by a determinant - more
precisely, we have a univariate polynomial which is the determinant of some
matrix  such that every cycle cover corresponds with one monomial in the 
determinant expansion. 
Since the ``bad'' cycle covers cancel out in the inclusion-exclusion,
the coefficient of the least degree term gives the correct count of the 
shortest cycle covers in $M_0$ which can then be extracted out by interpolation.
    For the two face case, we need the number of cycle covers with a 
    certain winding number modulo $k$. This can be read off from the monomial with the appropriate exponent 
    in the determinant polynomial.
\item \textbf{Search.} Using standard isolation techniques \cite{MVV}
we can construct a set of shortest $k$-disjoint paths in \RNC.
Similarly, a greedy strategy with the counting procedure as an oracle yields the witness in \Pt.
Since we essentially reduce counting shortest $k$-DPP solutions to computing
$O(4^k)$ determinants, we can do the counting in $O(4^kn^{\omega+1})$ time
where $\omega$ is the matrix multiplication constant. 
\end{enumerate}
\subsubsection*{Main Technical Contribution}
Our main technical ingredient here is the Cancellation Lemma~\ref{lem:cancel}
that makes it possible to reduce the count of disjoint paths
to signed counts over a larger set in such a way that the spurious terms cancel out.
This reduces the count of disjoint paths to the determinant.
To the best of our knowledge this is the first time a variant of the disjoint path problem 
has been reduced to the determinant, a parallelizable quantity (in contrast \cite{BH14} 
reduce $2$-DPP to the Permanent modulo $4$ for which no parallel algorithm is known).
\subsection{Organization}
We recall some preliminaries in Section~\ref{sec:prelims} and
describe the connection between $k$-disjoint paths and determinant in Section~\ref{sec:det}.
Section~\ref{sec:onefacepar} solves the parallel One-face case.
In Section~\ref{sec:onefacegen} we discuss the general One-face case and in Section~\ref{sec:twoface} the parallel Two-face case. 
In Section~\ref{sec:mainproof} we give the proof of Theorem~\ref{thm:main}. 
We extend our results for shortest $k$-DPP to a variant of shortest $k$-EDPP in Section~\ref{sec:edpp}.
We conclude in Section~\ref{sec:open} with some open ends.
\section{Preliminaries}\label{sec:prelims}
An \emph{embedding} of a graph $G = (V, E)$ into the plane is a mapping from $V$ to different points of $\mathbb{R}^2$, and from $E$ to internally disjoint simple curves in $\mathbb{R}^2$ such that the endpoints of the image of $(u,v) \in E$ are the images of vertices $u ,v \in V$. If such an embedding exists then $G$ is planar. The faces of an embedded planar graph $G$ are the maximal connected components of $\mathbb{R}^2$ that are disjoint from the image of $G$. We can find a planar embedding in \Log\ using \cite{AM04,DP11}.
Our proofs go through by reducing the problems to counting/isolating cycle covers.
Since the determinant of the adjacency matrix of a graph is the signed sum of its cycle
covers, we can count the lightest cycle covers by ensuring that all such cycle covers get the
same sign. Similarly, isolating one lightest cycle cover enables us to extract it via determinant computations.

We note the following seemingly innocuous but important:
\begin{fact}[see e.g. \cite{MV}]\label{fact:innocuous}
The sign of a permutation $\pi \in S_n$ equals $(-1)^{n + c}$ where $c$ is
the number of cycles in $\pi$.
\end{fact}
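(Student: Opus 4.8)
The plan is to reduce everything to the disjoint cycle decomposition of $\pi$, using the two standard structural facts about the sign that I would take as given (as recorded in \cite{MV}): that $\mathrm{sgn}\colon S_n \to \{+1,-1\}$ is a group homomorphism, and that every transposition has sign $-1$. Concretely, I would first write $\pi$ as a product of its disjoint cycles, \emph{counting each fixed point as a cycle of length $1$}, with lengths $\ell_1,\dots,\ell_c$. Since the cycles form a partition of the ground set $\{1,\dots,n\}$, these lengths satisfy $\sum_{i=1}^{c}\ell_i = n$; this bookkeeping convention is exactly what makes the exponent come out as claimed.

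Next I would record the cost of a single cycle in transpositions. A cycle of length $\ell$ factors as
\[
  (a_1\, a_2\, \cdots\, a_\ell) = (a_1\, a_\ell)(a_1\, a_{\ell-1})\cdots(a_1\, a_2),
\]
a product of $\ell-1$ transpositions, so by multiplicativity its sign is $(-1)^{\ell-1}$ (a fixed point, $\ell=1$, contributes the empty product and hence sign $+1$, as it should). Combining the cycles and again using that $\mathrm{sgn}$ is a homomorphism gives
\[
  \mathrm{sgn}(\pi) = \prod_{i=1}^{c}(-1)^{\ell_i-1} = (-1)^{\sum_{i=1}^{c}(\ell_i-1)} = (-1)^{n-c} = (-1)^{n+c},
\]
where the final equality holds because $n-c$ and $n+c$ differ by $2c$ and therefore have the same parity.

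There is no genuine obstacle in this argument; it is a direct computation once the cycle decomposition and the two basic properties of $\mathrm{sgn}$ are in hand. The only point deserving minor care is the convention that $c$ counts \emph{all} cycles, including the trivial $1$-cycles, so that the lengths sum to exactly $n$ and the telescoping $\sum_i(\ell_i-1) = n-c$ is valid. I would state this convention explicitly, since it is precisely the form in which the fact is later applied to cycle covers, where isolated vertices must be read as trivial cycles.
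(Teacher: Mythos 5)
Your proof is correct and complete; the paper states this as a citation-only fact (to \cite{MV}) and offers no proof of its own, and your argument --- disjoint cycle decomposition with fixed points counted as $1$-cycles, each $\ell$-cycle contributing $\ell-1$ transpositions, and the parity identity $(-1)^{n-c}=(-1)^{n+c}$ --- is exactly the standard derivation. Your explicit remark that $c$ must count trivial cycles is well placed, since that is precisely the convention under which the paper later applies the fact to cycle covers with self-loops.
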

\begin{figure}
\centering
\subcaptionbox{Parallel}
  {\includegraphics[width=.22\linewidth]{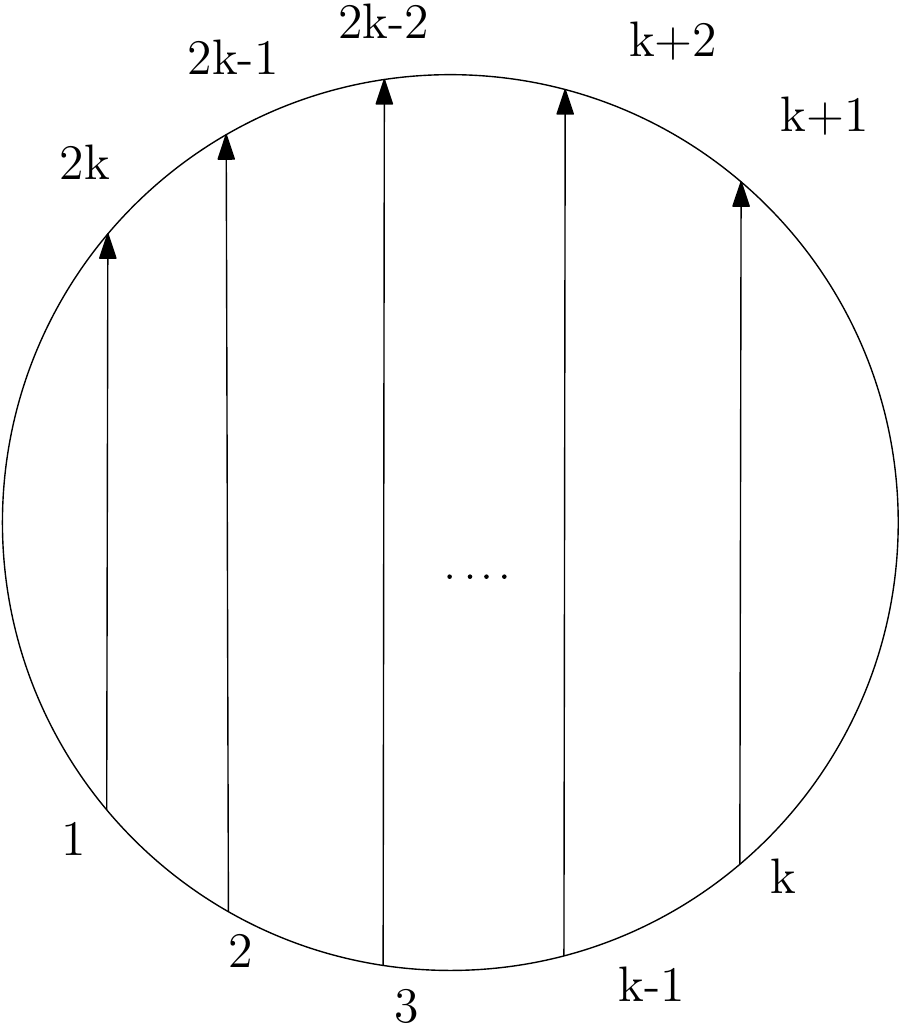}}
\hfill
\subcaptionbox{Serial}
  {\includegraphics[width=.259\linewidth]{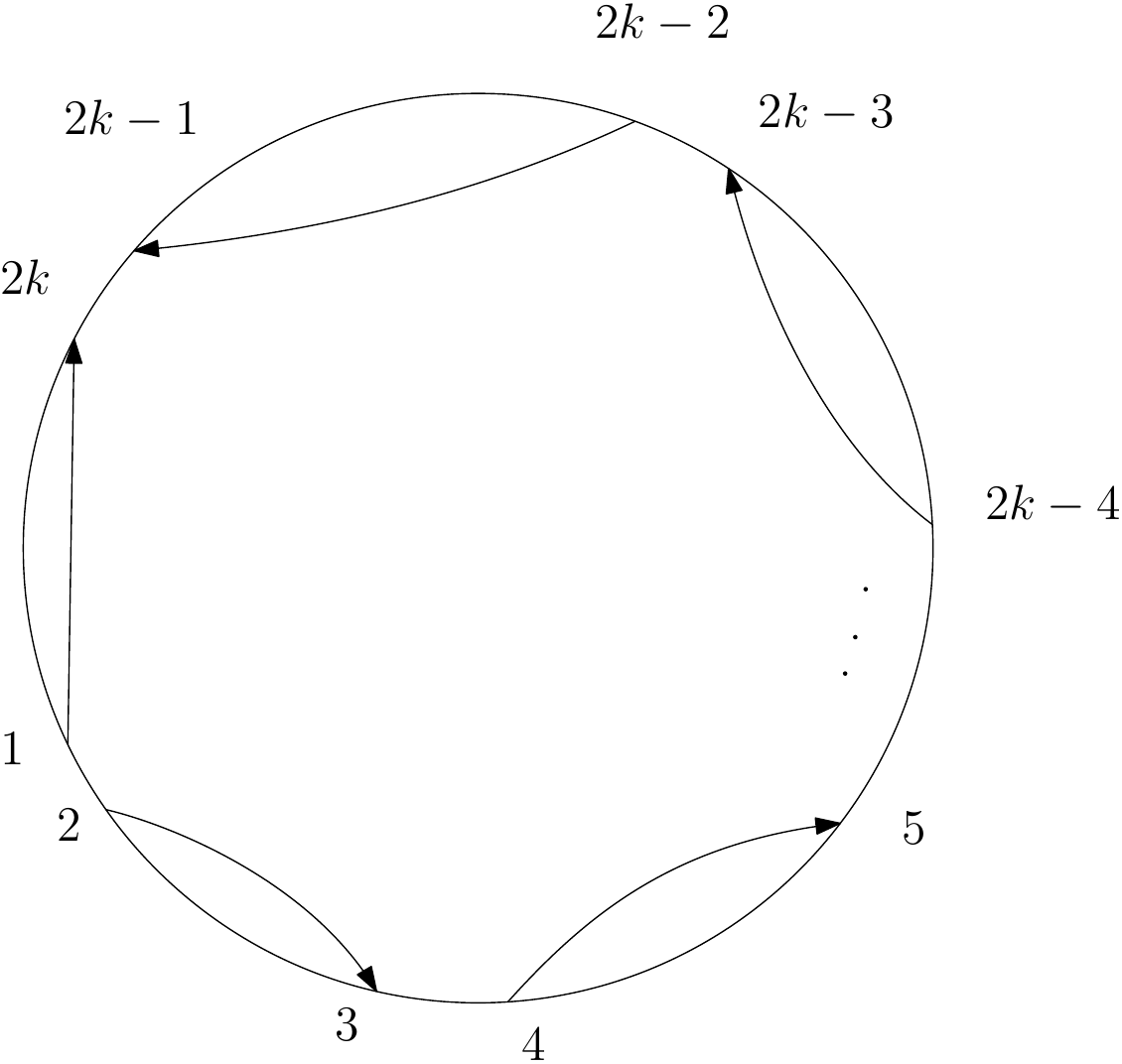}}
\hfill
  \subcaptionbox{General}
  {\includegraphics[width=.22\linewidth]{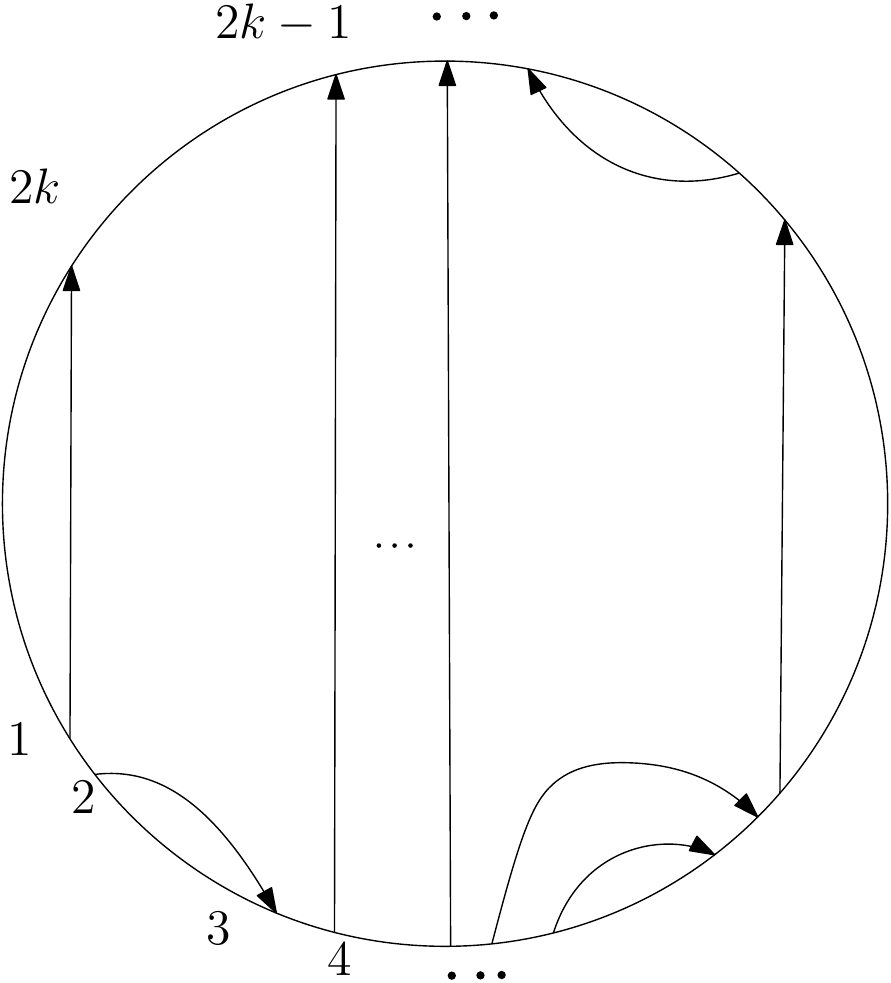}}
\caption{(a) Parallel (b) Serial (c) General Terminal Pairings}
\label{fig:configs}
\end{figure}
Let $G$ be a plane graph. We say that a set of $k$ terminal pairs 
$\{(s_i,t_i): i \in [k]\}$ (so called \emph{pairing}) is \emph{One-face} if the terminals 
all occur on a single face $F$. They are in parallel order if the pairs occur in the order 
$s_1,s_2,\ldots,s_k,t_k,\ldots,t_2,t_1$ on the facial boundary
and in serial order if they occur in the order $s_1,t_1,s_2,t_2,\ldots,s_k,t_k$. 
Otherwise the pairing is said to be in general order. 
If all the $k$ terminal pairs occur on two-faces $F_1$ and $F_2$, we call it \emph{Two-face}.
Here they are in parallel order if the sources $s_1,s_2,\ldots,s_k$ occur on one face and all the 
sinks $t_1,t_2,\ldots,t_k,$ are on another.
They are in serial order if on each of the two faces they occur in serial order (in the sense of One-face).
\section{Disjoint paths, cycle covers and determinant}\label{sec:det}
We first describe a basic graph modification step using which we can show
connections between cycle covers and shortest $k$-DPP. In the rest of the
paper, we will first perform the modification before applying our algorithms.
\subparagraph*{Modification Step.} 
Let $G$ be an undirected graph with $2k$ vertices
called as \emph{terminals}. The terminals are paired together into
$k$ disjoint ordered pairs. We refer to the $i^{th}$ pair as $(s_i,t_i)$, where
$s_i$ is the source and $t_i$ is the sink. We add directed edges
from the sink to source in each terminal pair and refer to them as
\textit{demand edges}. We call a set of $k$ such demand edges as a \text{pairing}. 
We subdivide each demand edge to yield $k$ new vertices. These $3k$ vertices we 
deem as special. Let the resulting graph be $H$. 
Lastly, we add self loops to all non-special vertices and weigh the rest of the
edges of $H$ by $x$. Let $H$ have adjacency matrix $A$. After adding
self loops and weighing edges, the resultant adjacency matrix $B$ can be
written as $D + xA$ where $D$ is the diagonal matrix with $1$'s for non-special
vertices and zeroes for special ones.

This is the weighted adjacency matrix of the graph with all original
edges of $H$ getting weight $x$ and with unit weight self loops on
non-special vertices.
There is a bijection between cycle covers in the graph and monomials
in the determinant $det(D + xA)$. Each cycle cover in turn consists
of disjoint cycles which are one of three types:
\begin{enumerate}
\item consisting alternately of paths between two terminals and demand edges.
\item a non-trivial cycle avoiding all terminals.
\item a trivial cycle i.e. a self loop.
\end{enumerate}
Thus every cycle cover contains a set of $k$ disjoint paths.
Further any collection of $k$ disjoint paths between the terminals (not
necessarily in the specified pairing) can be extended in at least one 
way to a cycle cover of the above type. Thus the set of all monomials
of the determinant are in bijection with the set of all $k$ disjoint paths.

Finally we have extensions of ``good'' $k$-disjoint paths 
(which are between a designated set of pairs of terminals), 
which are in bijection with a \emph{subset} of all cycle covers.
We call the corresponding set of cycle covers \emph{good cycle covers}.
This bijection carries over to some monomials (the so called 
\emph{good monomials}) of the determinant. Thus we obtain the following:
\begin{lemma}\label{lem:inj}
Let $B = D + xA$ as above.
The non-zero monomials in $det(B)$ are in bijection with the cycle
covers in the graph with weighted adjacency matrix $B$ and every 
cycle cover is also an extension of a $k$-disjoint path. 
These bijections also apply to the subset of ``good'' $k$-disjoint paths to yield, 
so called good cycle covers and good monomials. Since the bijection preserves the 
degree of a monomial as the length of the cycle cover it is mapped to, the least-order
term in $det(B)$ corresponds to the ``good" shortest $k$-disjoint paths.
\end{lemma}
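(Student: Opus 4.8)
The plan is to expand $\det(B)$ by the Leibniz formula $\det(B)=\sum_{\pi\in S_n}\operatorname{sgn}(\pi)\prod_i B_{i,\pi(i)}$ and read off which permutations contribute a non-zero product. Since $B=D+xA$ with $D$ diagonal (a $1$ at each non-special vertex and $0$ at each special one) and $xA$ carrying weight $x$ on every edge of $H$, a term survives exactly when for each $i$ either $\pi(i)=i$ with $i$ non-special (the self-loop, contributing $1$) or $(i,\pi(i))$ is an edge of $H$ (contributing $x$). The selected edge set $\{(i,\pi(i))\}$ then has in- and out-degree one at every vertex, i.e. it is a cycle cover, and conversely each cycle cover is the functional graph of a unique such $\pi$. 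This yields the first asserted bijection between the non-zero monomials (the individual Leibniz terms) and the cycle covers of the weighted graph; moreover the monomial's power of $x$ is precisely the number of weight-$x$ edges used, and its sign is $\operatorname{sgn}(\pi)=(-1)^{n+c}$ by Fact~\ref{fact:innocuous}, with $c$ the number of cycles of $\pi$.

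Next I would analyze the cycle structure of each contributing $\pi$. Because special vertices carry no self-loop, every terminal and every subdivision vertex $m_i$ must lie on a non-trivial cycle. Since $m_i$ has the subdivided demand edge as its only in- and out-edge, any cycle through $m_i$ is forced to traverse $t_i\to m_i\to s_i$, so all $k$ demand edges occur in every cycle cover. Tracing a cycle that contains a demand edge, after leaving $s_i$ it runs through the original edges of $G$ (available in both directions) until it reaches some sink $t_j$, takes that demand edge, and repeats; such cycles are exactly the type-$1$ alternation of $G$-paths and demand edges, cycles avoiding all terminals are type $2$, and self-loops are type $3$. Deleting the demand edges from the type-$1$ cycles leaves $k$ source-to-sink paths that are pairwise vertex-disjoint, since distinct cycles and self-loops of a cover share no vertex. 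This exhibits every cycle cover as an extension of a $k$-disjoint path, giving the second assertion.

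The source-to-sink incidences of those $k$ paths define a pairing of the terminals, and I would call a cycle cover \emph{good} exactly when this induced pairing is the designated $M_0$, with the associated Leibniz terms the \emph{good monomials}; as the preceding two steps only concern which edges are used, they restrict verbatim to give the bijection between good monomials and good cycle covers. For the degree statement, the $x$-exponent of a monomial equals the total number of weight-$x$ edges in its cycle cover, namely the total length of the $k$ paths plus the fixed contribution of the $2k$ subdivided demand edges plus the lengths of any type-$2$ cycles. Replacing a type-$2$ cycle by self-loops on its vertices strictly lowers this exponent, so the minimum over good cycle covers is attained precisely by the self-loop extensions of shortest good $k$-disjoint paths; hence the least-order good monomial corresponds to the shortest good $k$-DPP, as claimed (isolating good from unwanted monomials inside $\det(B)$ itself is the separate task of the Cancellation Lemma~\ref{lem:cancel}).

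The delicate points are the structural forcing in the second step and the precise sense of ``bijection.'' One must keep straight that the bijection claimed is monomial$\leftrightarrow$cycle cover, \emph{not} cycle cover$\leftrightarrow$$k$-DPP: a single $k$-disjoint path extends to many cycle covers, because off-path non-special vertices may be covered either by self-loops or by type-$2$ cycles, so the map from covers to paths is only many-to-one. The main obstacle is therefore to run the degree bookkeeping cleanly---showing that the minimum $x$-degree is realised only by self-loop extensions of shortest paths---while retaining the sign of each monomial via Fact~\ref{fact:innocuous}, since controlling those signs so that spurious covers cancel and shortest good covers survive is exactly what Lemma~\ref{lem:cancel} will later exploit.
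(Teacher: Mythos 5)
Your proposal is correct and follows essentially the same route as the paper's own (largely informal) argument: Leibniz expansion of $\det(D+xA)$ gives the term--cycle-cover correspondence, the absence of self-loops at special vertices forces every demand edge (via its subdivision vertex) onto a non-trivial cycle, cycles split into the same three types, and deleting demand edges yields the $k$ disjoint paths. Your explicit caveats --- that the bijection is term$\leftrightarrow$cover rather than cover$\leftrightarrow$path, and that the ``least-order term'' claim only makes sense restricted to good monomials pending the Cancellation Lemma --- are faithful to the lemma's intended reading and, if anything, state it more carefully than the surrounding text.
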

Let's focus on the terms that correspond to minimum length ``good" cycle covers. 
Then these terms have the same exponent $\ell$, the length of this shortest ``good" cycle cover. 
This is also the least exponent amongst all the ``good" monomials occurring in the determinant. Notice that their sign 
is the same. To see this, consider the sign given by $(-1)^{n+c}$ (see Fact~\ref{fact:innocuous}) 
where $n$ is the number of vertices and $c$ the number of cycles in  the cycle cover. The 
number of non self-loop cycles is $k$, the minimum number of cycles needed to 
cover all the vertices without self loops and equalling the number of source 
sink pairs. Notice that any extra cycles can be replaced by self loops yielding 
a cycle cover of strictly smaller length hence will not figure in the minimum
exponent term. The number of self loops is therefore $n - \ell$. Hence the
total number of cycles is $k + n - \ell$ for each of these terms hence the sign 
is $(-1)^{k - \ell}$ which is independent of the specific shortest cycle cover under consideration.
\begin{lemma}\label{lem:shortGood}
The shortest good cycle covers all have the same sign.
\end{lemma}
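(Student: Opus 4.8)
The plan is to read the sign off directly from Fact~\ref{fact:innocuous}, showing that for every shortest good cycle cover the quantity $(-1)^{n+c}$ collapses to a single expression depending only on the fixed numbers $n$, $k$, and the common minimum length $\ell$. First I would invoke Lemma~\ref{lem:inj}: since the bijection between good cycle covers and good monomials preserves the identity ``degree of monomial $=$ length of cover'', every \emph{shortest} good cycle cover corresponds to a good monomial of the same least exponent $\ell$. So it suffices to determine the number of cycles $c$ in each such cover and to argue that this number is the same across all of them.

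Next I would show that each shortest good cycle cover has exactly $k$ non-trivial (non-self-loop) cycles. For the lower bound, a good cycle cover by definition routes the $k$ demand edges of the designated pairing, and each demand edge lies on a non-trivial cycle, so there are at least $k$ of them. For the upper bound I would use the replacement argument sketched just before the lemma: any \emph{additional} non-trivial cycle must be of type~2, i.e.\ a cycle avoiding all terminals and hence supported entirely on non-special vertices; replacing each of its $x$-weighted edges by the unit-weight self loops available on those very vertices yields another good cycle cover of strictly smaller length, contradicting minimality. Hence exactly $k$ non-trivial cycles survive.

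Then I would count the self loops. In the permutation picture a cycle on $s$ vertices uses exactly $s$ weight-$x$ edges (subdivided demand edges are $x$-weighted too), so the $k$ non-trivial cycles together cover exactly $\ell$ vertices, where $\ell$ is the degree of the corresponding monomial. The remaining $n-\ell$ vertices are each covered by a self loop, giving $c = k + (n-\ell)$ cycles in total. Substituting into Fact~\ref{fact:innocuous}, the sign becomes $(-1)^{n+c} = (-1)^{2n+k-\ell} = (-1)^{k-\ell}$, which depends only on $k$ and $\ell$ and is therefore identical for every shortest good cycle cover.

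The main obstacle I expect is the second step: cleanly justifying that the number of non-trivial cycles is exactly $k$, and in particular that the type-2 removal both strictly decreases the $x$-degree and keeps the cover \emph{good}. This hinges on the structural fact that every vertex on a terminal-avoiding cycle is non-special and thus carries a unit-weight self loop, so the swap lowers the length without disturbing the $k$ designated source--sink routes. Once this claim is secured, the sign computation via Fact~\ref{fact:innocuous} is immediate.
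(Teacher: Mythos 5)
Your proposal is correct and follows essentially the same route as the paper: invoke Fact~\ref{fact:innocuous}, argue that a shortest good cycle cover has exactly $k$ non-trivial cycles (any extra terminal-avoiding cycle could be replaced by self loops, strictly shortening the cover) and $n-\ell$ self loops, so the sign is $(-1)^{k-\ell}$. Your added care in justifying the replacement step (every vertex on a terminal-avoiding cycle is non-special, hence carries a unit-weight self loop) only makes explicit what the paper leaves implicit.
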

Notice that ultimately we want to cancel out all monomials which are not good. 
In the one face case described in Section~\ref{sec:onefacegen} we show how to do this in the Cancellation Lemma~\ref{lem:cancel}. In the two face case, we cannot do this in general but by measuring how paths wind around the faces, we can characterize the cycle covers which we wish to obtain(see Theorem~\ref{thm:equiLenM}).
\section{Disjoint Paths on One-face: The parallel case}\label{sec:onefacepar}
In this section, we consider directed planar graphs where all the 
terminal vertices lie on a single face in the parallel order. Here we exhibit 
a weight preserving bijection between the set of $k$-disjoint paths in
the given graph and the set of cycle covers with exactly $k$ cycles in a 
\emph{modified graph} $G'$. We first modify the given graph as follows:

\subparagraph{Notation and Modification.} Let $G = (V, E)$ be the given directed 
planar graph with $n$ vertices and $m$ edges. Let $s_1, \ldots, s_k$ and 
$t_k, \ldots, t_1$ be the source and sink vertices 
respectively, all occurring on a face $F$ in the order specified above. 
We apply the modification step described in Section~\ref{sec:det} with one exception,
which is that the subdivided demand edges are of (additive) weight $0$. 
Let the modified graph be $G'$ with $n'$ vertices and $m'$ edges where 
$n' = n+k$ and $m'= m+2k$. $G'$ remains planar. Let $A'$ be the adjacency matrix of $G'$.

Recall that a cycle cover is a collection 
of directed vertex-disjoint cycles incident on every vertex in the graph. A
$k$-cycle cover is a cycle cover containing exactly $k$ non-trivial cycles 
(i.e. cycles that are not self-loops). We show the following bijection: 
\begin{lemma}[Parallel Bijection]\label{lem:bij}
There is a weight-preserving bijection between $k$-disjoint paths and 
$k$-cycle covers in the modified graph $G'$. 
\end{lemma}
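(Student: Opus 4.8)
The plan is to exhibit explicit maps in both directions and show they are mutually inverse, with the parallel hypothesis entering only to kill all spurious cycle covers (thereby upgrading the injection of Lemma~\ref{lem:inj} to a bijection). First I would define the forward map $\Phi$: given $k$ pairwise vertex-disjoint paths $P_1,\dots,P_k$ with $P_i$ running from $s_i$ to $t_i$ in $G$, I close each $P_i$ into a directed cycle $C_i$ by appending the subdivided demand path $t_i\to v_i\to s_i$, and I place a self-loop on every vertex of $G'$ not used by any $C_i$. Since the $P_i$ are vertex-disjoint and the subdivision vertices $v_i$ are distinct, the $C_i$ are vertex-disjoint non-trivial cycles, and together with the self-loops they form a cycle cover of $G'$ with exactly $k$ non-trivial cycles, i.e. a $k$-cycle cover. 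Because self-loops and the subdivided demand edges all contribute $0$ to the exponent of $x$, the degree of the resulting monomial equals the total number of original edges traversed, i.e. the total length of $P_1,\dots,P_k$; thus $\Phi$ is weight-preserving.

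For the reverse direction I would analyse an arbitrary $k$-cycle cover $\calC$ of $G'$. Each subdivision vertex $v_i$ has in- and out-degree one (its only incident arcs are $t_i\to v_i$ and $v_i\to s_i$), so $\calC$ must use the full demand path $t_i\to v_i\to s_i$ for every $i$; this in turn forces each sink's unique outgoing arc to be $t_j\to v_j$ and each source's unique incoming arc to be $v_j\to s_j$, and, the terminals being special, none of them carry self-loops. Hence every non-trivial cycle alternates between demand paths and maximal subpaths inside $G$, each of the latter running from some source to some sink with all interior vertices non-terminal. Since each source begins exactly one such subpath and each sink ends exactly one, these $G$-subpaths define a permutation $\sigma\in S_k$ with the subpath from $s_i$ ending at $t_{\sigma(i)}$, and the non-trivial cycles meeting the terminals are in bijection with the cycles of $\sigma$. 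Writing $c(\sigma)$ for the number of cycles of $\sigma$ and $\tau$ for the number of remaining non-trivial cycles (those avoiding all terminals), the total non-trivial cycle count is $c(\sigma)+\tau$, which by hypothesis equals $k$.

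The crux, and the only place the parallel order is used, is to show that $\sigma$ is the identity. The $G$-subpaths are pairwise vertex-disjoint paths whose endpoints all lie on the single face $F$, so by the Jordan curve theorem they realise a non-crossing matching of the terminals with respect to the cyclic order $s_1,\dots,s_k,t_k,\dots,t_1$ on $\partial F$. A short interleaving computation then pins down the matching: for $i<i'$ the chords $\{s_i,t_{\sigma(i)}\}$ and $\{s_{i'},t_{\sigma(i')}\}$ avoid crossing precisely when $\sigma(i)<\sigma(i')$, so the non-crossing condition forces $\sigma$ to be order-preserving and hence the identity. I expect this non-crossing step to be the main obstacle, since it is exactly where planarity together with the parallel arrangement must be invoked with care (in the general or serial orders other matchings survive, and the map is only an injection). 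Consequently $c(\sigma)=c(\mathrm{id})=k$, and $c(\sigma)+\tau=k$ forces $\tau=0$, so $\calC$ contains no terminal-avoiding cycle. Thus $\calC$ consists precisely of $k$ cycles, each a path $P_i$ from $s_i$ to $t_i$ closed by its demand path, together with self-loops; setting $\Psi(\calC)=(P_1,\dots,P_k)$ yields vertex-disjoint paths in the prescribed pairing. Finally I would verify $\Psi\circ\Phi=\mathrm{id}$ and $\Phi\circ\Psi=\mathrm{id}$ directly from the two constructions, and observe that weight preservation for $\Psi$ follows from the same additive-weight bookkeeping as for $\Phi$, completing the bijection.
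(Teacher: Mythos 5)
Your proposal is correct and has the same overall architecture as the paper's proof (a trivial forward map, plus a planarity argument for the reverse direction showing no cycle can join $s_i$ to $t_j$ with $j\neq i$), but the decisive topological step is carried out by a different route. The paper argues by contradiction on a single offending cycle $C$: using the parallel order $s_i,\dots,s_j,\dots,t_j,\dots,t_i$ along the boundary of $F$, it subdivides the demand paths and exhibits an explicit $K_{3,3}$ minor inside $C\cup F$, contradicting planarity of $G'$. You instead note that the source-to-sink subpaths of the cycle cover are disjoint curves with all endpoints on the boundary of $F$, hence induce a non-crossing chord matching, and your interleaving computation (which checks out: for $i<i'$ the chords $\{s_i,t_{\sigma(i)}\}$ and $\{s_{i'},t_{\sigma(i')}\}$ are non-crossing in the parallel order exactly when $\sigma(i)<\sigma(i')$) then forces $\sigma$ to be order-preserving and hence the identity. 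The two arguments encode the same planarity obstruction ($K_{3,3}$ minor versus Jordan curve), but yours is global rather than per-cycle and makes explicit a point the paper leaves implicit: since $c(\sigma)+\tau=k$ and $c(\mathrm{id})=k$, the hypothesis of exactly $k$ non-trivial cycles forces $\tau=0$, so no terminal-avoiding non-trivial cycles survive, whereas the paper's ``if each non-trivial cycle includes exactly one pair \dots we are done'' silently relies on the same pigeonhole. (The paper also records a third route in the remark following Lemma~\ref{lem:len}: the parallel pairing uniquely maximizes $\mylen{\cdot}$ and so is compatible only with itself.) Your version is, if anything, the more self-contained of the three.
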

\begin{proof}
Suppose the graph $G$ contains a set of $k$ disjoint paths. Consider a 
shortest set of $k$-disjoint paths of total length $\ell$. 
There are $k$ disjoint cycles in $G'$ corresponding to the shortest $k$ 
disjoint paths in $G$, using the new paths from $t_i$ to $s_i$ through $r_i$,
inside the face, for each $i \in [k]$. The $n-\ell-k$ vertices which are not covered 
by these $k$ cycles will use the self loops on them, yielding a $k$-cycle 
cover of $G'$. All these cycle covers have the same weight $\ell$.

For the other direction, consider a $k$-cycle cover in $G'$. If each
non-trivial cycle includes exactly one pair $s_i,t_i$
of terminals then we are done. 

Suppose not, then there is a cycle in the cycle 
cover which contains $s_i$ and $t_j$ for some $ 1\le i\neq j \le k$. We further
assume, without loss of generality, that there are no terminals other than
possibly $s_j,t_i$ between $s_i,t_j$ in the direction of traversal of this
cycle, called, say, $C$. Then $C$ must go through the vertices $r_j$ and $s_j$ 
since the only incoming edge incident on $r_j$ starts at $t_j$ and the only
outgoing edge leads to $s_j$. By the same logic $t_i$ and $r_i$ are on the
cycle $C$. Also notice that the vertices $t_i,r_i,s_i$ must occur
consecutively in that order and so must $t_j,r_j,s_j$. Let the $C$ be
$t_i,r_i,s_i,P_{ij},t_j,r_j,s_j,P_{ji},t_i$ where $P_{ij},P_{ji}$ are paths.
Let the face $F$ be $s_i,F_{ij},s_j,F_j,t_j,F_{ji},t_i,F_i,s_i$ where
$F_{ij},F_{ji},F_i,F_j$ are paths made of vertices and edges from $F$.
Since $C$ is simple $P_{ji}$ cannot intersect $P_{ij}$.
\begin{figure}
\centering
 \includegraphics[width=.35\linewidth]{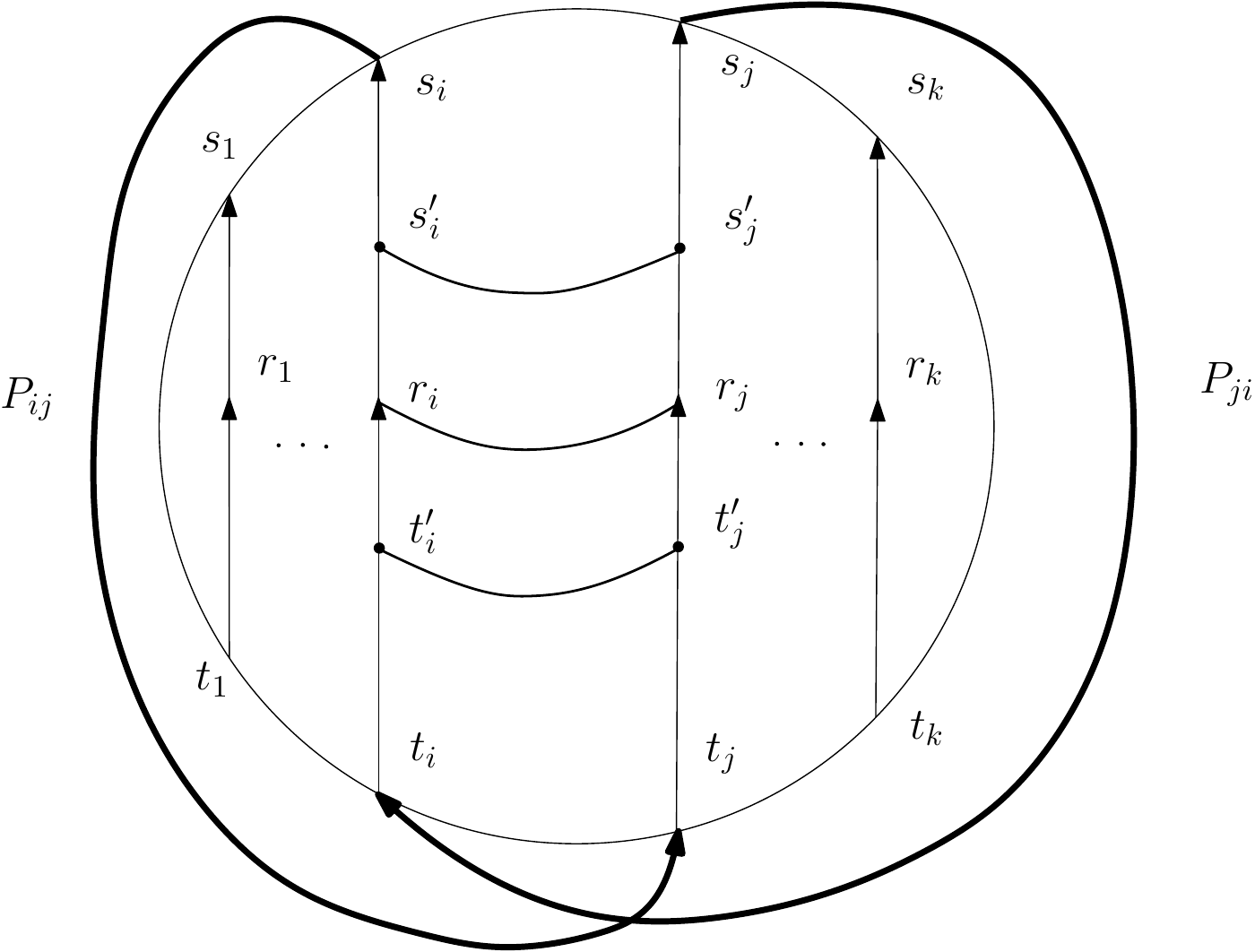}  
  \caption{Parallel Configuration. The bipartite subgraph $\{s'_i, r_j, t'_i\} \cup \{s'_j, r_i, t'_j\}$ gives a $K_{3,3,}$}
  \label{fig:ofp}
\end{figure}
Thus the region inside $F$ bounded by 
$t_i,r_i,s_i,F_{ij},s_j,r_j,t_j,F_{ji},t_i$ does not contain any vertex or edge
from $C$. Thus we can subdivide $(t_i,r_i),(r_i,s_i),(t_j,r_j),(r_j,s_j)$
to introduce vertices $t'_i,s'_i,t'_j,s'_j$ respectively and also the edges
$(t'_i,t'_j),(r_i,r_j),(s'_i,s'_j)$ without affecting the planarity of 
$C \cup F$. But now observe that the complete bipartite graph with 
$\{s'_i,r_j,t'_i\}$ and $\{s'_j,r_i,t'_j\}$ as the two sets of branch vertices
forms a minor of $C \cup F$ augmented with the above vertices and edges.
This contradicts the planarity of $G'$.

As the newly added edges (including the self loops) have weight $0$, the bijection is also weight preserving.
\end{proof}
\section{Disjoint Paths on One Face: The General Case}\label{sec:onefacegen}
In the last section (Section~\ref{sec:onefacepar}) we saw
an important special case - when all demands are in ``parallel'' and now we proceed to the more general case. 
We consider an embedding of an undirected planar graph $G$ with all the terminal vertices 
on a single face in some arbitrary order.
The primary idea is, given graph $G$ to construct a sequence 
of graphs $\mathcal{H}$ so that in the signed sum of the determinants of the graphs in 
$\mathcal{H}$ the uncancelled minimum weight cycle covers are in bijection 
with the shortest $k$-disjoint paths of $G$.
\subparagraph*{Notation and Modification.}
Let $s_1, \ldots, s_k$ and $t_1, \ldots, t_k$ be
the source and sink vertices respectively,
incident on a face $F$ in some arbitrary order. Label the terminals in the
counter clockwise order by $\{1,2,\ldots,2k\}$ and let $\ell(t)$
denote the label of terminal $t$. Consider the graph $G_T$ obtained by
applying the modification step in Section~\ref{sec:det}. A demand edge $(u,v)$ 
is said to be forward if $\ell(u) < \ell(v)$ and reverse otherwise. For any pairing
$M$ if the edges of $M$ are forward we declare the pairing to be in \emph{standard} form.
An even directed cycle is \emph{non trivial} if it has length at least $4$.
\subsection{Pure Cycle Covers}\label{sub:pcc0}
We define \emph{pure cycle covers} of a graph $G$ to be cycle covers $CC$, such that each cycle in $CC$
which contains a terminal also contains the corresponding mate of that terminal and no other terminal.
The mate of a source terminal is the corresponding sink terminal which is specified in the pairing under consideration. 
In the words, in a pure cycle cover no two terminal pairs are part of the same cycle.
Let the graph obtained by deleting all vertices and edges outside $F$ in $G_T$ be $\hat{G}_T$.
If two edges in $\hat{G}_T$ cross then the paths joining corresponding endpoints outside $F$ in $G_T$ will also cross. 
A bit more formally, the following is a consequence of the fact that two cycles in the plane must cross each other 
an even number of times. Notice that the following condition is necessary but not sufficient.  
\begin{observation}\label{obs:pcc1}
Unless $\hat{G}_T$ is outerplanar there is no pure cycle cover in $G$.
\end{observation}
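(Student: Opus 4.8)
The plan is to prove the contrapositive: assuming a pure cycle cover exists, I will show that $\hat{G}_T$ must be outerplanar. First I would unpack what non-outerplanarity of $\hat{G}_T$ means concretely. Since $\hat{G}_T$ is the facial boundary of $F$ together with the (subdivided) demand chords lying inside $F$, it is outerplanar exactly when no two of these chords interleave along the boundary; that is, $\hat{G}_T$ fails to be outerplanar precisely when there are two pairs $(s_i,t_i)$ and $(s_j,t_j)$ whose terminals appear in the alternating cyclic order $s_i, s_j, t_i, t_j$ on the facial boundary of $F$. So it suffices to rule out a pure cycle cover whenever such an interleaving pair exists.

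Next I would extract from a putative pure cycle cover the two offending cycles. By definition of purity, the cycle containing $s_i$ also contains its mate $t_i$ and no other terminal; call it $C_i$, and define $C_j$ analogously. Each $C_i$ decomposes into the demand part $t_i \to r_i \to s_i$, which lies inside the disk bounded by $F$, and a path $P_i$ from $s_i$ to $t_i$ through the original graph $G$, which lies outside this disk (since $F$ is a face, no edge of $G$ enters its interior). Because the cycle cover consists of vertex-disjoint cycles, $C_i$ and $C_j$ share no vertex, and since the embedding of $G$ is planar, the path-portions $P_i$ and $P_j$ are drawn as disjoint, non-crossing curves.

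The core of the argument is then a parity count of crossings, exactly the ``two cycles in the plane cross an even number of times'' principle flagged before the statement. Drawing the two demand chords inside the disk, interleaving forces them to cross an odd number of times there (one crossing for simple chords). Viewing $C_i$ and $C_j$ as closed curves on the sphere, their total number of crossings is even; partitioning the crossings into those inside $F$ (odd, coming from the chord pair) and those outside $F$ (coming from $P_i$ and $P_j$) forces $P_i$ and $P_j$ to cross an odd, hence nonzero, number of times. This contradicts the fact that vertex-disjoint paths in a planar embedding of $G$ never cross, so no pure cycle cover can exist.

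The main obstacle I anticipate is making the inside/outside partition of crossings fully rigorous: one must argue that the demand chords meet the paths $P_i, P_j$ only at shared endpoints, so that all ``inside'' crossings come from the chord pair and all ``outside'' crossings come from the path pair. A clean way to sidestep this bookkeeping is to compactify the plane so that the exterior of $F$ becomes a closed disk $D'$ with the terminals on its boundary in the (reflected, but still interleaving) cyclic order, and then invoke the Jordan curve theorem directly: $P_i$ together with a boundary arc of $D'$ separates $s_j$ from $t_j$, so any $s_j$--$t_j$ curve disjoint from $P_i$ is impossible. This isolates the single topological fact being used and avoids tracking the chords altogether.
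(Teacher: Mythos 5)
Your proof is correct and follows essentially the same route as the paper, which justifies the observation only by remarking that it follows from the fact that two closed curves in the plane must cross an even number of times: interleaving demand chords cross once inside $F$, forcing the vertex-disjoint exterior paths to cross, a contradiction. Your write-up merely makes this parity argument (and the Jordan-curve variant) explicit where the paper leaves it as a one-line remark.
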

Thus a (single) crossing between edges joining a pair of 
terminals inside $F$ in $G_T$ ensures a crossing between pairs of paths joining 
the same two pairs of terminals outside $F$.
\subsection{Cancelling Bad Cycle Covers}\label{sub:modGen}
\begin{definition}\label{SP} 
Consider two forward demand edges $h_1=(u_1,v_1)$ and $ h_2=(u_2,v_2)$. We say $h_1$ and $h_2$ are in series if either both endpoints of $h_1$ are smaller than both the endpoints of $h_2$ or vice-versa. If however, the sources of $h_1$ and $h_2$ are smaller than the corresponding sinks then the demands could be in parallel or interlacing with each other as follows. 
\begin{enumerate}
\item Parallel: either $\ell(u_1)<\ell(u_2)<\ell(v_2)<\ell(v_1)$ or $\ell(u_2)<\ell(u_1)<\ell(v_1)<\ell(v_2)$.
\item Interlacing: either $\ell(u_1)<\ell(u_2)<\ell(v_1)<\ell(v_2)$ or $\ell(u_2)<\ell(u_1)<\ell(v_2)<\ell(v_1)$.
\end{enumerate} 
\end{definition}
\begin{definition}\label{def:compat} 
An ordered pair $\ang{M,M'}$ of pairings is compatible if, when we direct $M$ in the standard form then there is a way to give directions to $M'$ such that the union of the two directed edge sets forms a set of directed cycles.
\end{definition}
\begin{figure}
\centering
  \subcaptionbox{Compatible Pairings}
  {\includegraphics[width=.27\linewidth]{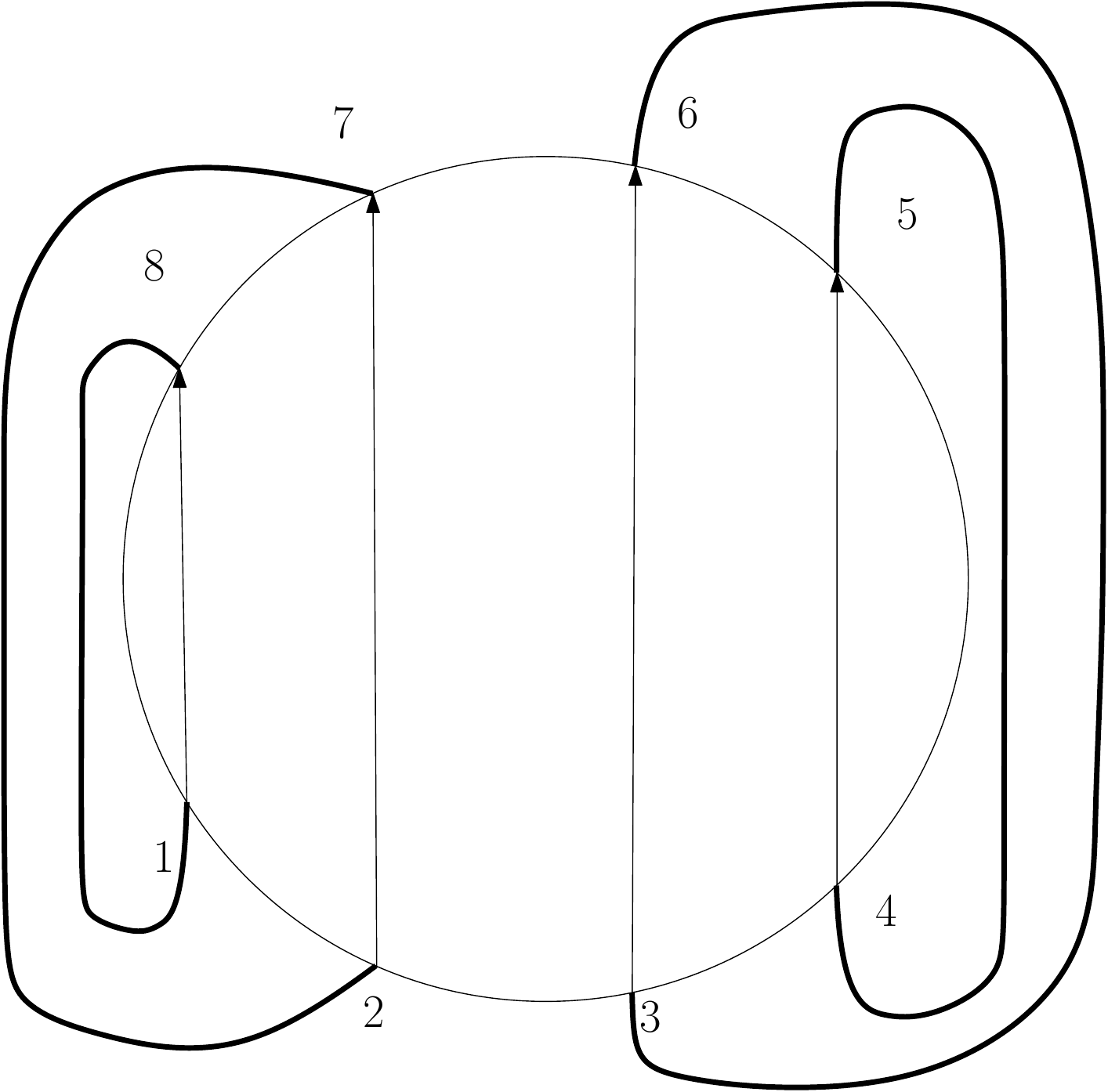}}
  \hspace{10mm}
  \subcaptionbox{Incompatible Pairings}
  {\includegraphics[width=.27\linewidth]{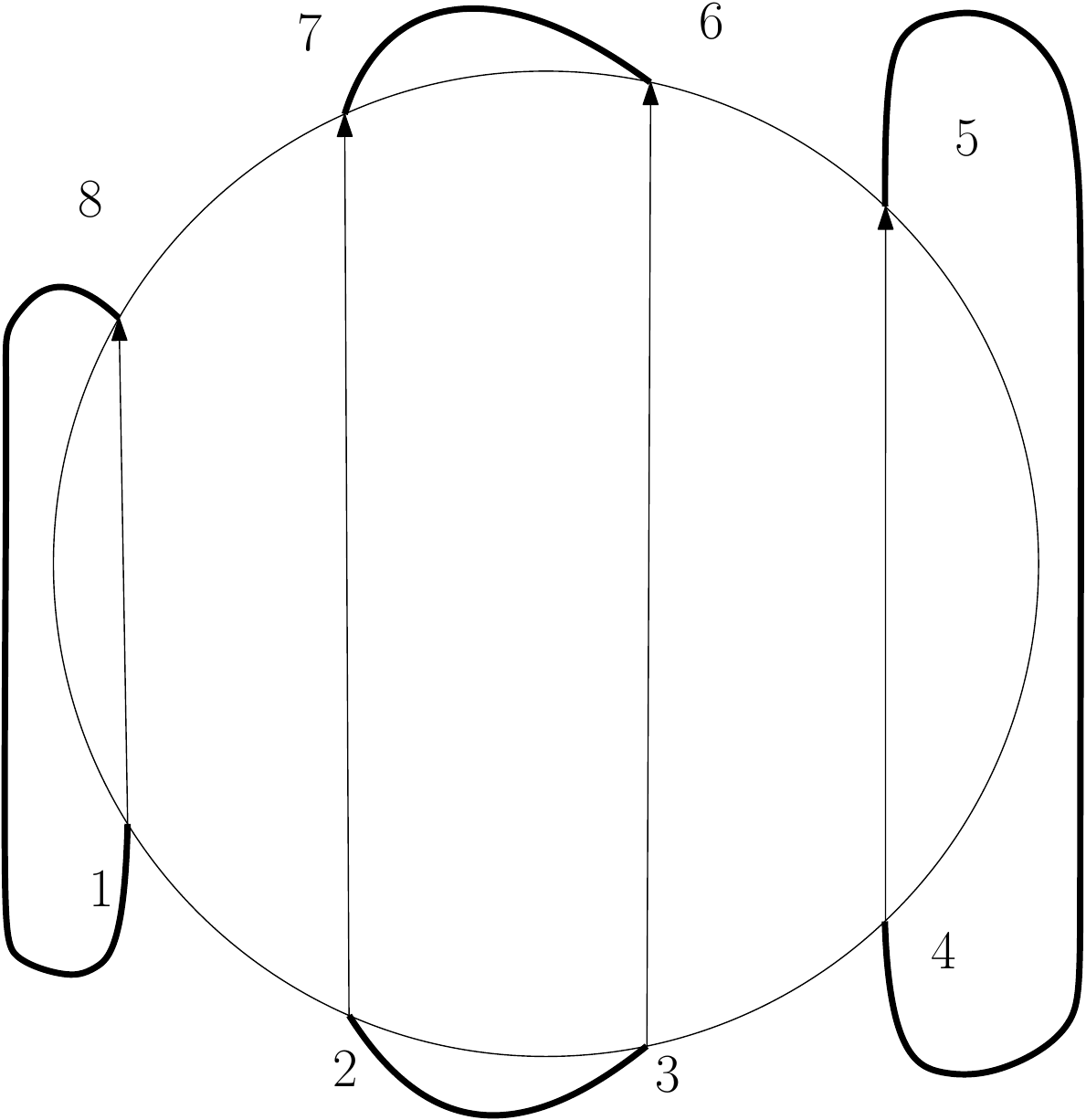}}
\caption{Compatible and Incompatible Pairings}
\label{fig:configs}
\end{figure}
See Figure~\ref{fig:configs} for an example. Let $\ang{M,M'}$ form a compatible pair. 
We call the edges of $M$ as internal edges and those of $M'$ as external edges.
\begin{lemma}\label{comp} 
Compatibility is reflexive and antisymmetric i.e. $\ang{M,M}$ is always 
compatible and if $\ang{M,M'}$ is compatible then $\ang{M',M}$ isn't. 
\end{lemma}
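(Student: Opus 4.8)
The plan is to first recast compatibility in a purely local form and then handle reflexivity and antisymmetry separately. Fix the counter-clockwise labelling $\ell$ of the $2k$ terminals and, for a pairing $N$ put in standard form, call a terminal a \emph{tail} if it is the smaller-labelled endpoint of its demand edge and a \emph{head} otherwise; this splits the terminals into equal-size sets $T_N$ and $H_N$. Since in $N\cup N'$ every terminal has degree two (one $N$-edge and one $N'$-edge), the union decomposes into vertex-disjoint cycles alternating between $N$- and $N'$-edges. At each terminal the fixed standard orientation of its $N$-edge forces the direction of the incident $N'$-edge, and for a given $N'$-edge the two endpoint forcings agree precisely when exactly one endpoint lies in $H_N$. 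Hence a completing orientation of $N'$ turning the union into a set of directed cycles exists if and only if every $N'$-edge joins a head to a tail of $N$. I would record this equivalence as the working characterisation of compatibility.

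Reflexivity is then immediate: directing the duplicate copy of each $M$-edge opposite to its standard direction turns every demand edge into a directed $2$-cycle, so $\ang{M,M}$ is always compatible (equivalently, every $M$-edge trivially joins its own tail to its own head).

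For antisymmetry, suppose towards a contradiction that $\ang{M,M'}$ and $\ang{M',M}$ are both compatible with $M\neq M'$. Because $M\neq M'$, the alternating decomposition of $M\cup M'$ contains a cycle $C$ of length at least $4$. Compatibility of $\ang{M,M'}$ yields a cyclic orientation $\sigma$ of $C$ in which every $M$-edge runs from smaller to larger label, and compatibility of $\ang{M',M}$ yields an orientation $\sigma'$ in which every $M'$-edge runs from smaller to larger label. If $\sigma=\sigma'$ then, traversing $C$ once, the label strictly increases across every edge, yet we return to the starting terminal — impossible. So $\sigma$ and $\sigma'$ must be opposite, and in the single traversal $\sigma$ the $M$-edges raise the label while the $M'$-edges lower it, so the terminals on $C$ alternate as local maxima and minima of $\ell$.

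The remaining, and main, step is to rule out this zig-zag cycle using the embedding, via exactly the mechanism behind Observation~\ref{obs:pcc1} and the $K_{3,3}$ argument of Lemma~\ref{lem:bij}: the edges of $M$ are drawn inside the face $F$ and those of $M'$ outside it, and a cycle whose terminals alternate as maxima and minima forces two internal chords of $M$ (or two external arcs of $M'$) to interleave on the facial boundary and therefore to cross, contradicting the planarity of $G'$. I expect this planarity step to be the crux, since the purely combinatorial orientation bookkeeping does not by itself exclude the opposite-orientation case — it is in fact realised by an interlacing (crossing) pairing in the sense of Definition~\ref{SP} — and it is precisely the embedding that forbids it. As a cleaner alternative for the write-up, I would instead track the potential $\Phi(N)=\sum_{e\in N}\mathrm{span}(e)$, the total label-span of the demand edges, and show from the head/tail characterisation that compatibility forces $\Phi(M')\ge\Phi(M)$, with equality only when $M$ and $M'$ induce the same head/tail partition; two-way compatibility then equalises the potentials, and the same non-crossing (outerplanarity) input upgrades ``equal partition'' to $M=M'$, contradicting $M\neq M'$.
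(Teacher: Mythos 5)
Your head/tail characterisation of Definition~\ref{def:compat} is correct, and your reflexivity argument is exactly the paper's (direct the outside copy of $M$ oppositely to obtain $k$ directed $2$-cycles). For antisymmetry, note first that your ``cleaner alternative'' is in fact the paper's actual route: your potential $\Phi$ is precisely the paper's $\mylen{\cdot}$, and the paper derives antisymmetry from Lemma~\ref{lem:len} (namely $\mylen{M}<\mylen{M'}$ whenever $\ang{M,M'}$ is compatible and $M\neq M'$), proved by telescoping $\ell$ around each alternating cycle and applying the triangle inequality. So your primary route (the two orientations $\sigma,\sigma'$) and your alternative are two packagings of the same argument, and both stall at the same place.

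That place is a genuine gap in your write-up. The orientation/telescoping bookkeeping only yields $\mylen{M}\le\mylen{M'}$, with equality exactly in your ``zig-zag'' case, where every terminal of the cycle is a local extremum of $\ell$ (equivalently, every $M'$-edge of the cycle is reversed in the orientation making all $M$-edges forward). You correctly observe that pure combinatorics cannot exclude this case: indeed $M=\{(1,3),(2,4)\}$ and $M'=\{(1,4),(2,3)\}$ are mutually compatible under the literal Definition~\ref{def:compat} and satisfy $\mylen{M}=\mylen{M'}=4$, so the statement is false unless one restricts to non-crossing (outerplanar) pairings --- the only ones that matter, since crossing pairings admit no routing by Observation~\ref{obs:pcc1}. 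But you then merely assert, without proof, that a zig-zag cycle forces two chords of $M$ or two arcs of $M'$ to interleave; that assertion carries the entire weight of the antisymmetry claim and needs an actual argument (for instance, examine the smallest-labelled terminal of the cycle and the nesting of its two incident edges, or run the Jordan-curve/$K_{3,3}$ argument of Lemma~\ref{lem:bij}). To be fair, the paper's own proof of Lemma~\ref{lem:len} is equally silent on why the equality case cannot occur, so you have located --- but not repaired --- a weak joint that is also present in the original.
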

\begin{proof}
$\ang{M,M}$ is always a compatible pair since for any pairing $M$ inside 
just put $M$ outside with demand edges directed in the opposite direction. 
Antisymmetry follows from Lemma~\ref{lem:len}.
\end{proof}
\begin{definition}\label{def:len} 
Define $\mylen{u,v} = \ell(v)-\ell(u)$ for every demand edge $(u,v)$. Let $\mylen{M}$ be the sum of lengths of demand edges of $M$ when the pairing $M$ is placed inside and $\mylen{\vec{M}}$ be the sum of lengths of the demand edges when the pairing comes with directions not necessarily in the standard form.
\end{definition}
For external demand edges $\mylen{u,v}$ may be negative, but for internal edges $\mylen{u,v}$ is positive since the internal demand edges are always drawn with $u < v$.
\begin{lemma}\label{lem:len}
If $\ang{M,M'}$ is a compatible pair and $M \neq M'$ then $\mylen{M} <\mylen{M'}$.
\end{lemma}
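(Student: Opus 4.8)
The plan is to turn the inequality into one clean telescoping identity that reduces the whole claim to a single qualitative statement, and then to extract the strictness from planarity.

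First I would exploit the directed-cycle structure guaranteed by compatibility. Orienting $M$ in standard form and orienting $M'$ so that $M\cup M'$ is a disjoint union of directed cycles, assign to each directed edge from $a$ to $b$ the value $\ell(b)-\ell(a)$. Around any directed cycle these values telescope to $0$, so summing over all cycles the total signed length of $M\cup M'$ vanishes. The internal edges are all forward, contributing exactly $\mylen{M}$, so the external directed edges contribute $\mylen{\vec{M'}}=-\mylen{M}$. Writing each external edge with its forward length $|e|=|\ell(v)-\ell(u)|$, so that $\mylen{M'}=\sum_{e\in M'}|e|$ while $\mylen{\vec{M'}}$ is the same sum with a sign per edge, I obtain
\[
\mylen{M'}-\mylen{M}\;=\;\mylen{M'}+\mylen{\vec{M'}}\;=\;2\sum_{e\in F}|e|\;\ge\;0,
\]
where $F$ is the set of external edges that are oriented forward. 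Equality holds exactly when \emph{every} external edge is reverse. Thus the problem reduces to showing that a compatible pair with $M\neq M'$ must contain at least one forward external edge.

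Next I would rule out the degenerate ``all external edges reverse'' case by showing it forces $M=M'$. Combining the in/out-degree-one condition at each vertex with the all-reverse assumption, every vertex $w$ has both its internal and its external partner on the same side: either both larger than $w$ (call $w$ \emph{low}) or both smaller (call $w$ \emph{high}). Hence both $M$ and $M'$ are matchings of the \emph{same} bipartition $(L,H)$ in which every edge joins its low endpoint to a strictly larger high endpoint. The crucial geometric input is that compatibility realizes $M\cup M'$ as disjoint simple closed curves with $M$ drawn inside $F$ and $M'$ outside, so both the internal chords and the external chords are non-crossing. I would then prove that a non-crossing graded matching of a fixed bipartition $(L,H)$ is unique: the smallest high-labelled vertex $h$ (which satisfies $h\ge 2$, since vertex $1$ is always low) must be matched to $h-1$, because $h-1$ is necessarily low and any other partner for $h$ interlaces with the edge leaving $h-1$ and produces a crossing; deleting this forced pair and inducting gives uniqueness. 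Since $M$ and $M'$ are then two non-crossing graded matchings of the same bipartition, they coincide, contradicting $M\neq M'$. Therefore some external edge is forward and $\mylen{M}<\mylen{M'}$.

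The main obstacle is the second step: pinning down that compatibility forbids crossings and then using it. Everything hinges on reading ``forms a set of directed cycles'' as a statement about simple closed curves in the planar drawing, which is precisely what excludes the otherwise-genuine near-counterexamples where the external chords interlace — for instance $M=\ang{(1,4),(2,3)}$ and $M'=\ang{(1,3),(2,4)}$ on four terminals, for which the signed identity of the first step only yields $\mylen{M}=\mylen{M'}$. The telescoping identity itself is routine; the real content, and the source of the strict inequality, is the uniqueness of the graded non-crossing matching.
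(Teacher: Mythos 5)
Your proof is correct, and its first half is exactly the paper's argument: orient $M\cup M'$ into directed cycles, telescope the labels around each cycle to get $\mylen{A}+\mylen{\vec{A'}}=0$, and compare $\mylen{\vec{A'}}$ with $\mylen{A'}$. Where you diverge is that the paper stops there --- it asserts that ``the lemma follows'' from $|\mylen{\vec{A'}}|=\mylen{A}$, which as you observe only yields $\mylen{M}\le\mylen{M'}$, with equality precisely when every external edge is reverse. Your example $M=\{(1,4),(2,3)\}$, $M'=\{(1,3),(2,4)\}$ shows this degenerate case is combinatorially realizable under the literal wording of Definition~\ref{def:compat} (and would even break the antisymmetry claimed in Lemma~\ref{comp}), so the strict inequality genuinely needs the extra input that both pairings are non-crossing chord diagrams --- which is implicit in the paper's setting ($M'$ must be realized by vertex-disjoint paths outside $F$, and the paper elsewhere counts pairings as outerplanar matchings via the Catalan number), but is invoked nowhere in its one-line proof. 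Your second step --- showing that if all external edges are reverse then $M$ and $M'$ are both graded non-crossing matchings of the same low/high bipartition, and that such a matching is unique by the ``smallest high vertex must marry its left neighbour'' induction --- correctly closes this gap. So the two arguments share the telescoping identity, but yours supplies the missing case analysis that actually delivers strictness; the only caveat is that you must (as you do) make explicit the non-crossing hypothesis on the pairings, since the purely combinatorial notion of compatibility in Definition~\ref{def:compat} is not enough.
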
 
\begin{proof}(of Lemma~\ref{lem:len})
It suffices to prove this for a non-trivial cycle in $M \cup M'$. Let the edges of the cycle $C$ be partitioned into $A,A'$ according to which one is inside. $\mylen{A} +\mylen{\vec{A'}} = 0$ where $\vec{A'}$ is the version of $A'$ oriented according to the orientation of $M'$ when placed outside (because each vertex of $C$ occurs with opposite sign in $\mylen{A}$ and $\mylen{\vec{A'}}$. Notice that to go from $\vec{A'}$ to $A'$ we need to convert the reverse edges to forward edges, which increases the absolute value of $\mylen{\vec{A'}}$). Since in absolute value $A$ and $\vec{A'}$ have the same length, the lemma follows.
\end{proof}
\begin{remark}
This yields an alternative shorter proof of the Parallel Bijection Lemma~\ref{lem:bij} by observing that the parallel pairing is the unique pairing with maximum length thus has no compatible pairing other than itself.
\end{remark}
A set of disjoint paths $R$ in $G$ between a collection of  pairs of terminals which form a pairing $M$ is called a routing. We say that $R$ \emph{corresponds} to $M$ in this case.

For pairings $M,M'$ let $W(\ang{M,M'})$ denote the weighted signed sum of all cycle covers consisting of the pairing $M$ inside the face and routing $R'$ that correspond to the pairing $M'$, outside the face.
\begin{observation}\label{obs:len}
$W(\ang{M,M'})$ will be zero unless $\ang{M,M'}$ is a compatible pair or $M = M'$.
\end{observation}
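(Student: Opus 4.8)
The plan is to prove the contrapositive: if $\ang{M,M'}$ is not compatible and $M \neq M'$, then \emph{no} cycle cover of the form summed by $W(\ang{M,M'})$ exists, so the signed sum is over the empty set and is therefore zero. Thus I never need to reason about cancellation of signs; it suffices to show that the existence of a single cycle cover contributing to $W(\ang{M,M'})$ already certifies that $\ang{M,M'}$ is compatible (with $M = M'$ carved out as the degenerate case, which is compatible anyway by Lemma~\ref{comp}).

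So suppose $W(\ang{M,M'})\neq 0$ and fix a cycle cover $CC$ it counts. By definition its special edges inside $F$ are exactly the standard-oriented demand edges of $M$, while outside $F$ it carries a routing $R'$ corresponding to $M'$, i.e.\ a family of vertex-disjoint directed paths in $G$, each joining the two terminals of a pair of $M'$. Since $CC$ is a genuine directed cycle cover, every terminal has in-degree and out-degree one and is incident to exactly one demand-edge end and exactly one routing-path end; hence the cycles of $CC$ that meet the terminals must alternate between demand edges of $M$ (treating each subdivided $t\to r\to s$ as a single oriented chord) and maximal routing subpaths of $R'$ running between two consecutive terminals. I would then contract each such maximal routing subpath to a single directed edge on its terminal endpoints. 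Because $R'$ realizes $M'$, the contracted edges are precisely the chords of $M'$, each oriented according to the direction in which $CC$ traverses that subpath, whereas the demand edges retain the standard orientation of $M$. Contraction leaves the cyclic incidence structure on the terminals unchanged, so the union of standard-oriented $M$ and this orientation of $M'$ is exactly the image of the terminal cycles of $CC$: a set of vertex-disjoint directed cycles covering all the terminals. This is precisely the witness demanded by Definition~\ref{def:compat}, so $\ang{M,M'}$ is compatible; the only way to avoid this conclusion is the degenerate situation where every routing subpath shares both endpoints with a single demand edge, which forces $M=M'$.

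The step I expect to require the most care is the orientation bookkeeping: I must verify that the orientation of $M'$ read off from $CC$ is a \emph{legal} orientation for compatibility rather than one clashing with the forced alternation around a cycle, and that contraction neither merges nor splits cycles. Both follow from the local balance at each terminal — exactly one of its two incident special objects (a demand edge, or the end of a routing subpath) points in and the other points out — which is preserved under contracting the routing subpaths and propagates to the global alternation that makes the contracted object a set of directed cycles. Everything else is routine, so I would keep the write-up brief and lean on Lemma~\ref{lem:inj} for the decomposition of cycle covers into terminal cycles, self-loops, and terminal-free cycles (the latter two playing no role in the $M,M'$ structure).
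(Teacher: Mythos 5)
Your argument is correct and is exactly the reasoning the paper leaves implicit (the Observation is stated without proof): a cycle cover contributing to $W(\ang{M,M'})$ places standard-oriented $M$ inside and a routing realizing $M'$ outside, and contracting each routing path to an oriented chord exhibits the directed-cycle witness required by Definition~\ref{def:compat}, so a non-empty sum forces compatibility. The only cosmetic remark is that your carve-out for $M=M'$ is unnecessary, since Lemma~\ref{comp} already makes $\ang{M,M}$ compatible, but this does not affect correctness.
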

Also notice that the cycle cover has an arbitrary set of (disjoint) cycles
covering vertices not lying on the routing in the sense that we may cover such
vertices by non self-loops. Let's abbreviate 
$W(\ang{M,*}) = \sum_{M' :M' \mbox{ is a pairing}}{W(\ang{M,M'})}$.
From Lemma~\ref{lem:len} and Observation~\ref{obs:len} we have that:
\begin{proposition}\label{prop:len}
$W(\ang{M,*}) =  \sum_{M' : \mylen{M'} > \mylen{M} \vee M' = M}{W(\ang{M,M'})}$
\end{proposition}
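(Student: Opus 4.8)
The plan is to derive this identity directly from Observation~\ref{obs:len} and Lemma~\ref{lem:len}, by showing that both the unrestricted sum $W(\ang{M,*})$ and the restricted sum on the right collapse to the \emph{same} sum over ``compatible-or-equal'' pairings. First I would partition the set of all pairings $M'$ — over which $W(\ang{M,*})$ is defined — into three classes: (i) $M' = M$; (ii) $M' \neq M$ with $\ang{M,M'}$ compatible; and (iii) $M' \neq M$ with $\ang{M,M'}$ incompatible. By Observation~\ref{obs:len}, every term in class (iii) contributes $W(\ang{M,M'}) = 0$, so only classes (i) and (ii) survive:
\[
W(\ang{M,*}) = W(\ang{M,M}) + \sum_{\substack{M' \neq M \\ \ang{M,M'}\ \text{compatible}}} W(\ang{M,M'}).
\]

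Next I would invoke Lemma~\ref{lem:len}: every $M'$ appearing in class (ii) satisfies $\mylen{M} < \mylen{M'}$, i.e. $\mylen{M'} > \mylen{M}$. Together with the term $M' = M$ from class (i), this shows that every index contributing a nonzero term to $W(\ang{M,*})$ lies in the index set $\{M' : \mylen{M'} > \mylen{M} \vee M' = M\}$ of the proposition's right-hand side. Thus the value of $W(\ang{M,*})$ is fully accounted for among the terms of the right-hand sum.

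The one point that needs care — and which I regard as the only genuine content beyond bookkeeping — is the reverse direction: the index set $\{M' : \mylen{M'} > \mylen{M} \vee M' = M\}$ is a \emph{superset} of the compatible-or-equal pairings rather than equal to it, since there may well exist pairings $M'$ that are strictly longer than $M$ yet \emph{not} compatible with it. For the stated identity to hold I must verify that these extra indices contribute nothing. But this is precisely Observation~\ref{obs:len} applied a second time: any such $M'$ has $M' \neq M$ with $\ang{M,M'}$ incompatible, hence $W(\ang{M,M'}) = 0$. Consequently, enlarging the sum to range over the whole set $\{M' : \mylen{M'} > \mylen{M} \vee M' = M\}$ leaves its value unchanged, so the right-hand side reduces to the same compatible-or-equal sum as $W(\ang{M,*})$. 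Matching the two directions yields the claimed equality.
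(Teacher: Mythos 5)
Your proof is correct and follows exactly the route the paper intends: the paper derives Proposition~\ref{prop:len} directly from Observation~\ref{obs:len} (incompatible pairs contribute zero) and Lemma~\ref{lem:len} (compatible $M' \neq M$ have $\mylen{M'} > \mylen{M}$), which is precisely your two-direction bookkeeping argument. Your explicit check that the larger index set on the right-hand side only adds vanishing terms is a worthwhile clarification of a step the paper leaves implicit.
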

Another fact we will need is that:
\begin{proposition}\label{prop:sign}
$W(\ang{M,M'}) = (-1)^{k-c_{M,M'}}W(\ang{M',M'})$ where $c_{M,M'}$ 
is the number of cycles passing through at least one demand edge
in the union $M \cup M'$ (and $k$ the total number of terminal pairs and 
equals the number of cycles in $\ang{M',M'}$).
\end{proposition}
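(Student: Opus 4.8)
The plan is to exhibit a weight-preserving, sign-controlled bijection between the cycle covers counted by $W(\ang{M,M'})$ and those counted by $W(\ang{M',M'})$, and then to track how the sign $(-1)^{n'+c}$ of a cycle cover (Fact~\ref{fact:innocuous}) changes under this bijection. First I would fix the pair $\ang{M,M'}$ and assume it is compatible, since otherwise $W(\ang{M,M'})=0$ by Observation~\ref{obs:len} and there is nothing to prove. Every cycle cover contributing to $W(\ang{M,M'})$ decomposes canonically into three parts: (i) the internal demand edges, which are exactly the edges of $M$ directed in standard form; (ii) an external routing $R'$ realizing $M'$; and (iii) the leftover cycles and self-loops covering the vertices of $G_T$ that lie neither on a terminal-path nor on a demand edge. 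The bijection I would use fixes parts (ii) and (iii) and simply replaces the internal edge set $M$ by $M'$, sending each such cover to the cover built from the same routing $R'$, the same leftover cycles, and the demand edges of $M'$ placed inside. Because $R'$ already realizes $M'$, closing it up with $M'$'s demand edges always yields a valid cycle cover, so the map lands in the set counted by $W(\ang{M',M'})$; and it is clearly invertible by swapping $M'$ back for $M$.

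Next I would argue weight preservation. The external routing $R'$ and the leftover cycles and self-loops are literally unchanged by the map, hence contribute the same power of $x$. The only edges that change are the internal demand edges, but $M$ and $M'$ each consist of exactly $k$ demand edges built identically by the modification step, so they contribute equally to the weight (indeed nothing under the weight-$0$ convention). Thus corresponding covers carry the same monomial degree, and the map is weight preserving.

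Finally I would compute the sign change. Both graphs involved have the same number of vertices $n'=n+k$, so by Fact~\ref{fact:innocuous} the sign of a cover is governed by its total number of cycles. The leftover cycles and self-loops are identical on the two sides and so contribute the same count; the only difference is among the cycles passing through demand edges. On the $\ang{M',M'}$ side each path of $R'$ closes with its own $M'$-demand edge, giving exactly $k$ such cycles; on the $\ang{M,M'}$ side the same paths are reglued by $M$, producing the cycles of $M\cup M'$ through demand edges, of which there are $c_{M,M'}$ --- a quantity determined by the topology of $M\cup M'$ alone and hence uniform over all admissible $R'$. Consequently the two signs differ by exactly $(-1)^{c_{M,M'}-k}=(-1)^{k-c_{M,M'}}$, independent of the particular cover, and summing the weight-preserving bijection term by term yields $W(\ang{M,M'})=(-1)^{k-c_{M,M'}}W(\ang{M',M'})$.

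I expect the main obstacle to be the bookkeeping that makes the bijection genuinely well defined: one must verify that the leftover part (ii)+(iii) is supported on the same set of non-special vertices on both sides, so that it transports unchanged, and that the number of demand-edge cycles on the $\ang{M,M'}$ side really equals $c_{M,M'}$ \emph{uniformly}, i.e.\ depends only on the combinatorics of $M\cup M'$ and not on the chosen routing $R'$. This is precisely where compatibility of $\ang{M,M'}$ (Definition~\ref{def:compat}) is used, guaranteeing that the $M$-completion closes into cycles rather than leaving dangling paths.
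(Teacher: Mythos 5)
Your proposal is correct and follows essentially the same route as the paper: fix the external routing $R'$ and the leftover cycles, observe that only the regluing of the terminal-paths by demand edges changes, and compare cycle counts ($c_{M,M'}+k'$ versus $k+k'$) via Fact~\ref{fact:innocuous} to get the sign $(-1)^{k-c_{M,M'}}$. Your added bookkeeping (that $c_{M,M'}$ depends only on the combinatorics of $M\cup M'$ and that the leftover part transports unchanged) is a sound elaboration of what the paper leaves implicit.
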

\begin{proof}(of Proposition~\ref{prop:sign})
Notice that the paths belonging to the routing $R'$ are the same in
both $\ang{M,M'}$ and $\ang{M',M'}$. Thereafter it is an immediate 
consequence of the assumption that the number of cycles in 
$M \cup M'$ is $c_{M,M'} + k'$ (where $k'$ is the the number of cycles 
avoiding all terminals in $\ang{M,M'}$), in $M \cup M$ is $k + k'$ 
(because number of cycles avoiding all terminals in $\ang{M,M'}$ is 
the same as the number of cycles in $\ang{M',M'}$) and of Fact~\ref{fact:innocuous}.
\end{proof}
Thus by plugging in the values from Proposition~\ref{prop:sign} in Proposition~\ref{prop:len} and rearranging, we get the main result 
of this section:
\begin{lemma}[Cancellation Lemma]\label{lem:cancel}
$$W(\ang{M,M}) =  W(\ang{M,*}) + \sum_{M' : \mylen{M'} > \mylen{M}}{(-1)^{k+c_{M,M'}+1}W(\ang{M',M'})}$$
\end{lemma}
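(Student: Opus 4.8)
The plan is to derive the Cancellation Lemma purely by algebraic substitution, since Propositions~\ref{prop:len} and~\ref{prop:sign} already carry all of the combinatorial content. First I would start from Proposition~\ref{prop:len}, which expresses $W(\ang{M,*})$ as a sum of $W(\ang{M,M'})$ over those $M'$ that are either equal to $M$ or strictly longer than $M$ in the $\mylen{\cdot}$ measure, and peel off the diagonal term:
$$W(\ang{M,*}) = W(\ang{M,M}) + \sum_{M' : \mylen{M'} > \mylen{M}}{W(\ang{M,M'})}.$$
This already isolates the target quantity $W(\ang{M,M})$ on the right-hand side, with the residual sum ranging only over strictly longer pairings.

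Next I would rewrite each off-diagonal summand using Proposition~\ref{prop:sign}, namely $W(\ang{M,M'}) = (-1)^{k-c_{M,M'}}W(\ang{M',M'})$. Substituting turns the sum into one over the diagonal quantities $W(\ang{M',M'})$, which is exactly the form we want since those are the terms computable as determinants. After substitution the identity reads
$$W(\ang{M,*}) = W(\ang{M,M}) + \sum_{M' : \mylen{M'} > \mylen{M}}{(-1)^{k-c_{M,M'}}W(\ang{M',M'})}.$$

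Finally I would solve for $W(\ang{M,M})$ by transposing the sum. The only point requiring care is the sign bookkeeping: moving the sum across introduces an overall minus, which I would fold into the exponent using the elementary parity identity $(-1)^{-c}=(-1)^{c}$, so that $-(-1)^{k-c_{M,M'}} = (-1)^{k+c_{M,M'}+1}$. This rewrites the negative exponent of Proposition~\ref{prop:sign} into the $+c_{M,M'}$ form stated in the lemma and produces precisely the claimed identity.

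I do not expect a genuine obstacle, as the lemma is a formal consequence of the two preceding propositions; the substantive work lives upstream (the length comparison of Lemma~\ref{lem:len} that restricts the summation range, and the cycle-count sign computation via Fact~\ref{fact:innocuous} that underlies Proposition~\ref{prop:sign}). The one place a careless argument could slip is in the treatment of the $M'=M$ term: it must be handled by separating it out in the first step rather than by invoking Proposition~\ref{prop:sign} at $M'=M$ (which would implicitly assume $c_{M,M}=k$); keeping it outside the substituted sum keeps the derivation clean and non-circular.
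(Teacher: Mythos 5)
Your proposal is correct and matches the paper's own derivation, which likewise just substitutes Proposition~\ref{prop:sign} into Proposition~\ref{prop:len} and rearranges, absorbing the transposed minus sign into the exponent via $(-1)^{-c_{M,M'}}=(-1)^{c_{M,M'}}$. Your extra care in peeling off the $M'=M$ term before substituting is sound (and harmless, since $c_{M,M}=k$ anyway), so nothing is missing.
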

We illustrate this with an example in Subsection~\ref{expl}.
\subsection{An Example of the One-face Case}\label{expl}
Let $M_1 = \{(1,8),(2,5),(3,4),(6,7)\}$ be the pairing to start with. $M_1$ is compatible with a routing, say $R_2$, whose corresponding pairing is $M_2 = \{(1,8),(2,7),(3,4),(5,6)\}$. We consider the pairing $M_2$ then which is compatible with another routing, say $R_3$ and the corresponding pairing be $M_3 = \{(1,8),(2,7),(3,6),(4,5)\}$. 
Since $M_3$ is in parallel configuration, from Lemma~\ref{lem:bij} the only routing compatible with $M_3$ corresponds to $M_3$ itself and the recursion stops. We illustrate this in Figure~\ref{fig:test}. From the above discussion, we have the following sequence of equations. 
\begin{figure}
\centering
  \subcaptionbox{}
  {\includegraphics[width=.25\linewidth]{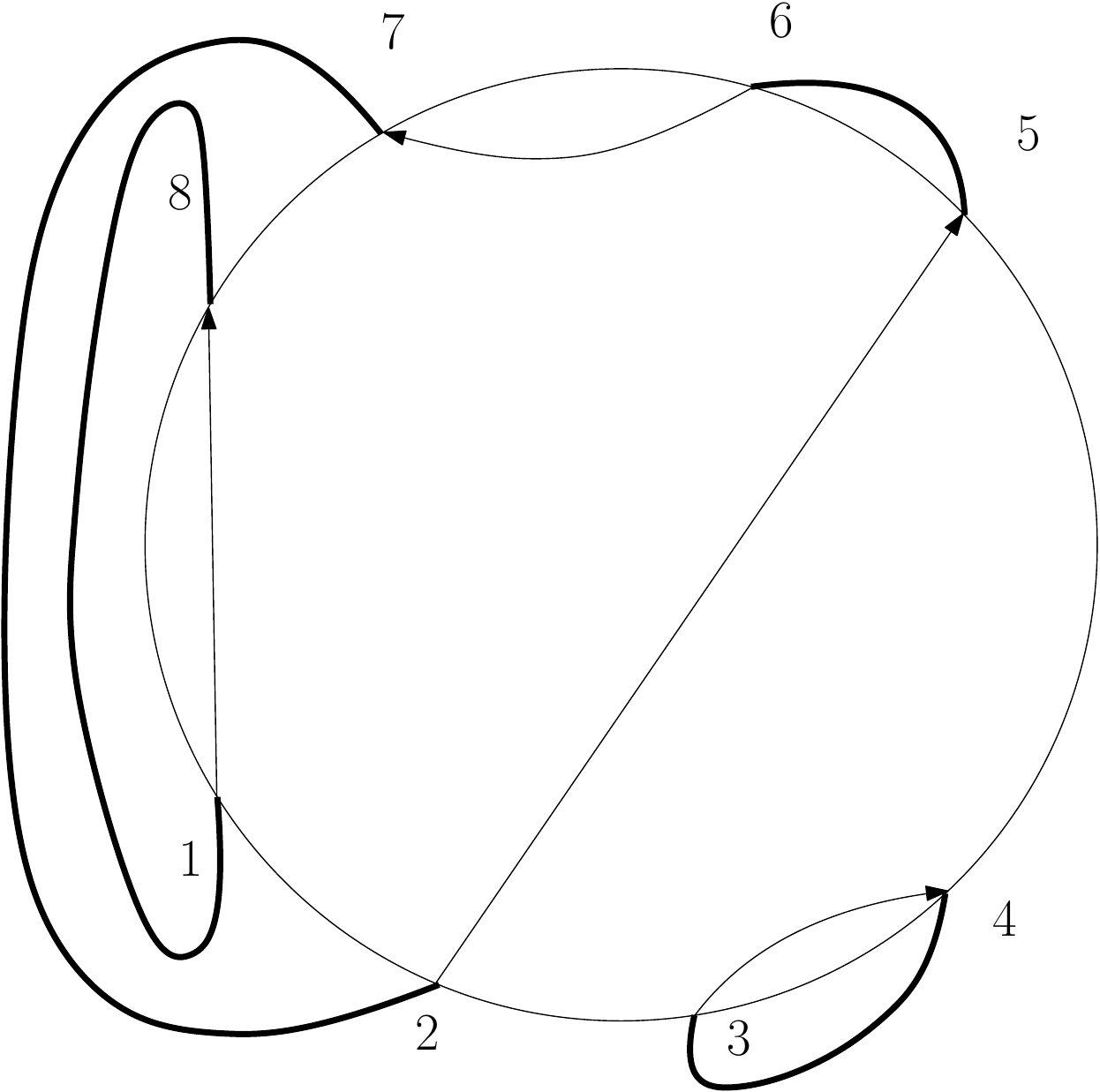}}
  \hfill
  \subcaptionbox{}
  {\includegraphics[width=.3\linewidth]{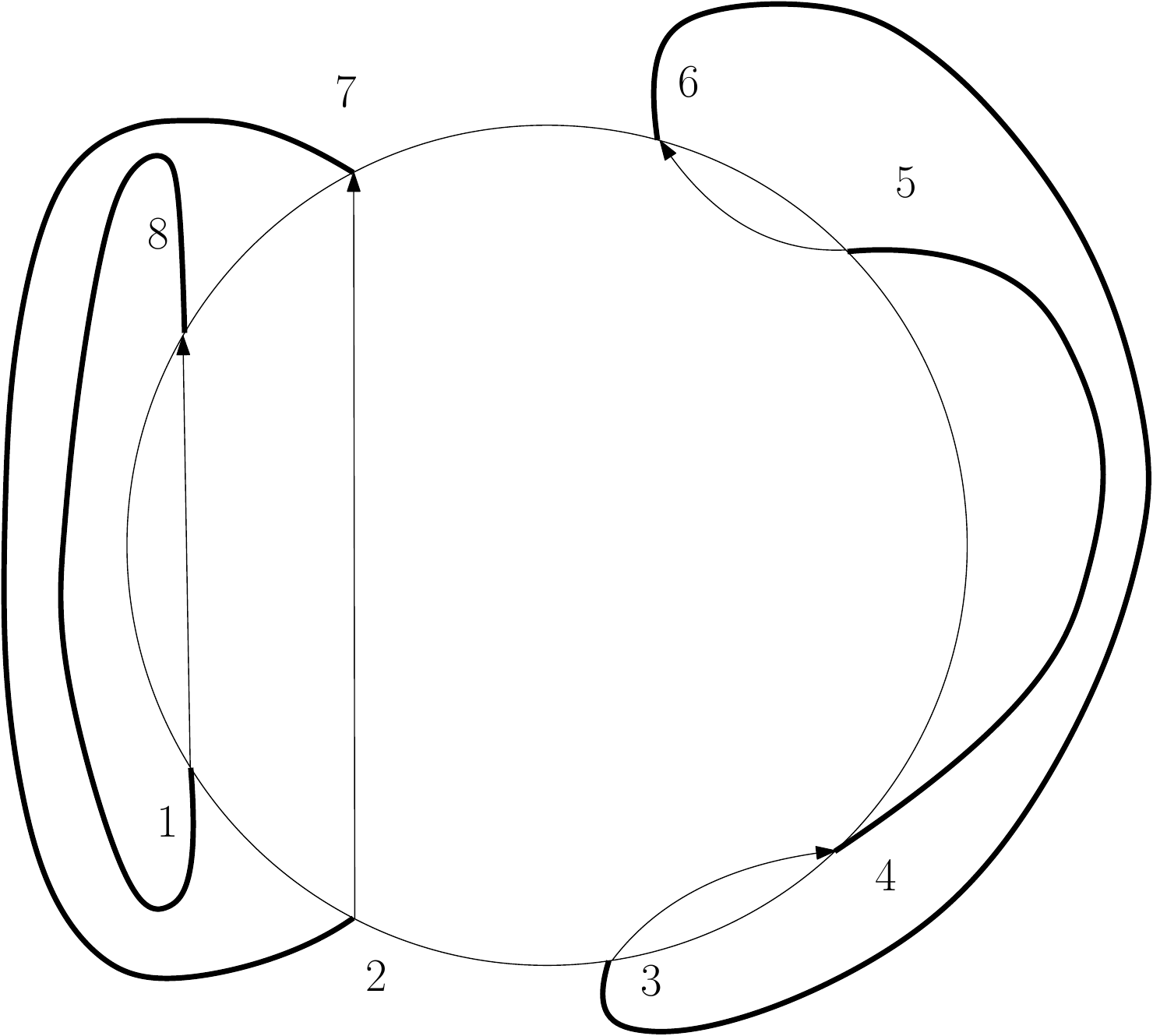}}
  \hfill
  \subcaptionbox{}
  {\includegraphics[width=.25\linewidth]{exmp3.pdf}}
  \hfill
\caption{An Example (a) $M_1 \cup M_2$ (b) $M_2 \cup M_3$ (c) $\ang{M_3,*} = M_3 \cup M_3$ }
\label{fig:test}
\end{figure}
\begin{eqnarray*}
W(\ang{M_1,M_1})  &=& W(\ang{M_1,*}) - W(\ang{M_1,M_2}) \\
W(\ang{M_1,M_2})  &=& - W(\ang{M_2,M_2}) \\
W(\ang{M_2,M_2})  &=& W(\ang{M_2,*}) - W(\ang{M_2,M_3}) \\
W(\ang{M_2,M_3})  &=& - W(\ang{M_3,M_3}) \\
W(\ang{M_3,*}) &=& W(\ang{M_3,M_3})
\end{eqnarray*}
After substitutions we get, 
$$ W(\ang{M_1,M_1}) = W(\ang{M_1,*}) + W(\ang{M_2,*}) + W(\ang{M_3,*}) $$
\newpage
\section{Disjoint Paths on Two faces: The parallel case}\label{sec:twoface}
In this section, we solve the shortest $k$-DPP on planar graphs such that all terminals lie on two faces, say $f_1,f_2$ in some embedding of the graph and all the demands are directed from one face to another. The key difference between the one-face case and the two-face case is that the compatibility relation in the two-face case is not antisymmetric. Consequently, the pairings in the two-face case cannot directly be put together as a DAG(see Figure~\ref{fig:twofacepar}) and we are unable to perform an inclusion-exclusion (like in Lemma~\ref{lem:cancel}).
\begin{figure}
\centering
  \includegraphics[width=.4\linewidth]{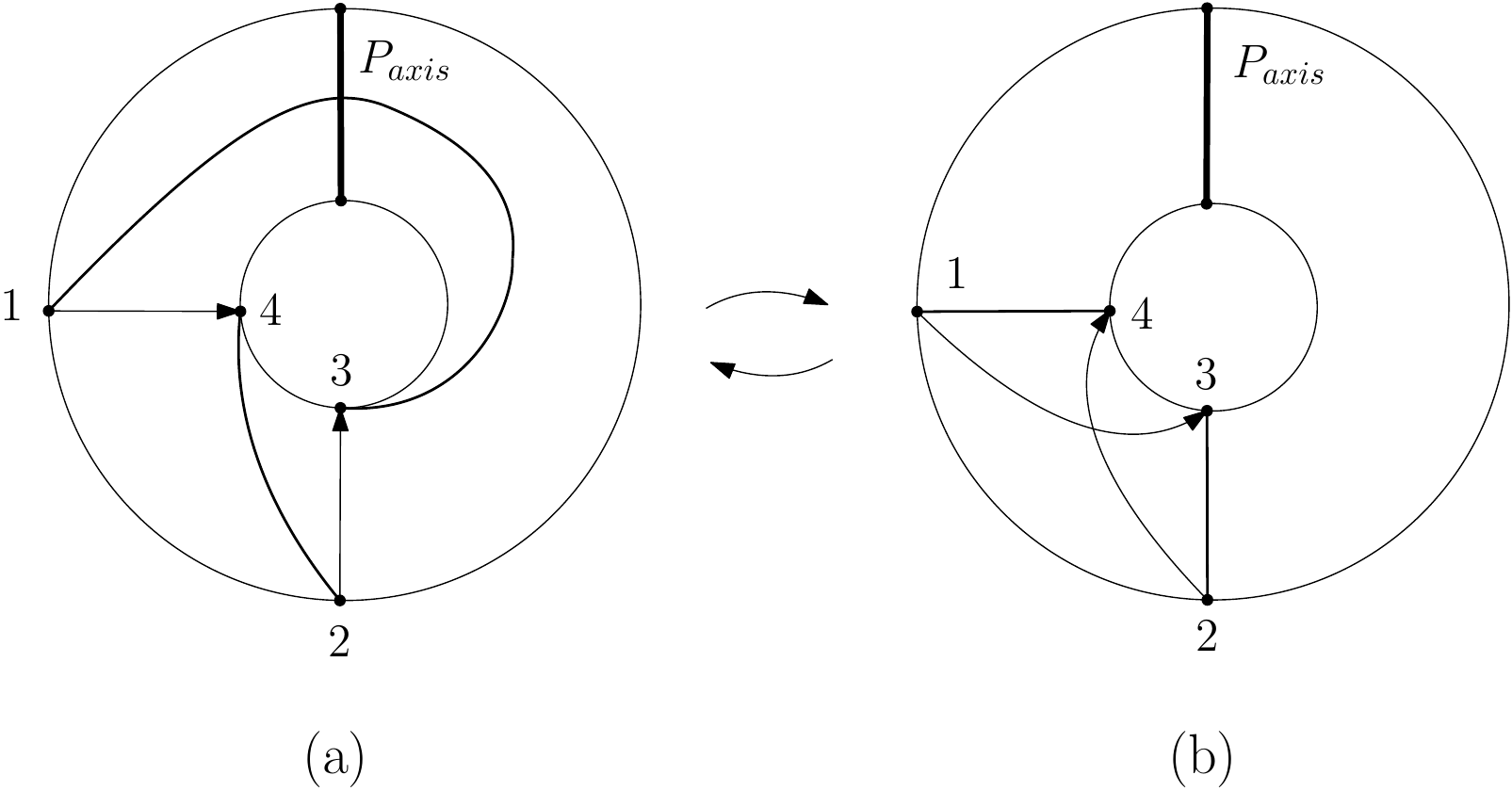}
  \caption{The presence of two faces allows routings of two pairings to be present in the determinant of each other like in this example. $P_{axis}$ is a path between the two faces. (a) shows two parallel demands on two faces and (b) shows a different configuration for the two parallel demands. Notice that one of the two paths necessarily needs to cross the axis in order to obtain (b) from (a), whereas to obtain the pure cycle cover of (a) both paths must cross the axis equal number of times.}
  \label{fig:twofacepar}
\end{figure}
\subparagraph*{Notation and Modification.}
We connect $f_1,f_2$ by a path $P_{axis}$ in the directed dual graph $G^*$. We consider the corresponding primal arcs of $P_{axis}$ which are directed from $f_1$ to $f_2$(in the dual) and weigh them by an indeterminate $y$. Without loss of generality, we can assume that these arcs are counter clockwise as seen from $P_{axis}$. Similarly, the primal arcs of $P_{axis}$ which are directed from $f_2$ to $f_1$(in the dual) are weighed by $y^{-1}$. According to our convention, these arcs are clockwise as seen from $P_{axis}$. We number the terminals of the graph in the following manner. Take the face $f_2$ and start labeling the terminals in a counter-clockwise manner starting from the vertex immediately to the left of $P_{axis}$ as $1,2,\ldots,k$ and then label the terminals of $f_1$ again in a counter-clockwise manner starting from the vertex immediately to the right of the dual path as $k +1,\dots,2k$. 
For any terminal $s$, $\ell(s)$ describes the label associated with $s$. We now apply the modification step in Section~\ref{sec:det} and direct the demand edges forward.
Throughout this section, we fix a pairing $M$ such that each demand edge of $M$ has one terminal on either face. We refer to these types of demand edges as cross demand edges and denote them by $\CD_{M}$. Clearly, $|\CD_M| = k$.
\subsection{Pure Cycle Covers}
Like in Subsection~\ref{sub:pcc0} \emph{pure cycle covers} are defined to be cycle covers $CC$, such that each cycle in $CC$
which contains a terminal also contains the corresponding mate of that terminal and no other terminal.
We begin with Lemma~\ref{lem:intuitive} from \cite{RobertsonS86a} which will be useful to analyze the two-face $k$-DPP. In their notation, the two faces having terminals are $C_1$, $C_2$ with $C_1$ inside $C_2$ in the embedding of $G$. 
For completeness sake, we have provided a proof in the Appendix. 
\begin{lemma}[Quoted from Section~5~\cite{RobertsonS86a}]\label{lem:intuitive}
We represent the surface on which $C_1,C_2$ are drawn by $\sigma = \{(r,\theta): 1 \leq r\leq 2,0\leq\theta\leq 2\pi\}$. Let $f: [0,1] \rightarrow \sigma$ be continuous. Then it has finite \emph{winding number} $\theta(f)$ defined intuitively as $\frac{1}{2\pi}$ times the the total angle turned through (measured counterclockwise) by the line $OX$, where $O$ is the origin, $X= f(x)$, and $x$ ranges from 0 to 1. Let $\mathcal{L}$ be a set of $k$ paths drawn on $\sigma$, pairwise disjoint. We call such a set $\mathcal{L}$ a linkage. If $\mathcal{L}$ is a linkage then clearly $\theta(P)$ is constant for $P \in \mathcal{L}$, and we denote this common value by $\theta(\mathcal{L})$. 
\end{lemma}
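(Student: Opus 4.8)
The plan is to treat $\sigma$ as an annulus and reduce the claim to a statement about its universal cover. The only substantive assertion is that $\theta(P)$ takes the same value for every $P \in \mathcal{L}$; finiteness of $\theta(f)$ is immediate, since $f$ is continuous on the compact interval $[0,1]$ and the angular coordinate therefore sweeps through a bounded total angle. To make $\theta$ precise I would fix the covering map $p : [1,2]\times\mathbb{R} \to \sigma$ sending $(r,\tilde\theta)$ to $(r,\tilde\theta \bmod 2\pi)$, lift each path $f$ to a path $\tilde f$ in the strip $\tilde\sigma = [1,2]\times\mathbb{R}$, and define $2\pi\,\theta(f)$ to be the net change in the $\tilde\theta$-coordinate between the endpoints of $\tilde f$. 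This agrees with the intuitive ``angle turned through by $OX$'' description and is independent of the chosen lift, because any two lifts differ by a deck translation $\tilde\theta \mapsto \tilde\theta + 2\pi n$ that cancels in the difference.

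For the constancy, I would take two paths $P, Q \in \mathcal{L}$ and fix a lift $\tilde P$. The deck group is $\mathbb{Z}$ acting by $\tilde\theta \mapsto \tilde\theta + 2\pi n$, so the full preimage $p^{-1}(P)$ is the disjoint union of the translates $\tilde P_n = \tilde P + 2\pi n$. Because $P$ is a simple arc crossing the annulus from one boundary circle to the other, each $\tilde P_n$ is an arc crossing $\tilde\sigma$ from $r=1$ to $r=2$, and consecutive translates $\tilde P_n, \tilde P_{n+1}$ bound a closed region $R_n$, these regions tiling the strip as $n$ ranges over $\mathbb{Z}$. Lifting $Q$ to $\tilde Q$ and using $P \cap Q = \emptyset$ on $\sigma$, I obtain $\tilde Q \cap \tilde P_n = \emptyset$ for all $n$, so by connectedness of $\tilde Q$ it is confined to a single region $R_n$.

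It then remains to deduce $\theta(Q) = \theta(P)$ from this confinement. I would argue that the endpoints of $\tilde Q$, which lie on the boundary lines $r=1$ and $r=2$ of $\tilde\sigma$, are pinned between the corresponding endpoints of $\tilde P_n$ and $\tilde P_{n+1}$; since these two bounding arcs differ by exactly $2\pi$ in the $\tilde\theta$-coordinate, the net $\tilde\theta$-displacement of $\tilde Q$ is squeezed to equal that of $\tilde P$, yielding equal winding numbers. The main obstacle I anticipate is exactly this squeezing step: one must rule out the degenerate possibility that $Q$ doubles back so as to change its net angular displacement while staying inside $R_n$, and one must keep track of where the terminal endpoints fall in the cyclic order on each boundary circle. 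This is cleanest if one records, for each path, the interval of $\tilde\theta$-values into which its two endpoints lift, and observes that disjointness forces these intervals to be nested consistently on both circles, so that equal winding is the statement that the shifts on the inner and outer circle agree. Since in the DPP application the terminals are distinct points on the two faces, this boundary-order bookkeeping is finite and unambiguous, and no further topological input is needed.
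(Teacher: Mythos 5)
Your proposal is correct in outline but takes a genuinely different route from the paper: the paper does not really prove this lemma at all. Its appendix simply reproduces Robertson--Seymour's definitions ($\theta(P)$ is well defined, $\theta(P_1)-\theta(P_2)\in\mathbb{Z}$ for paths with common endpoints, with the terminals placed at $M_i=\{(i,\tfrac{2j}{k}\pi)\}$) and then asserts the conclusion with the one-line intuition that two simple arcs winding around the annulus a different number of times must intersect. Your covering-space argument is the standard way to make that sentence precise, and it buys an actual proof: lifting to the strip, observing that the deck translates $\tilde P_n$ of a properly embedded crossing arc are pairwise disjoint and cut the strip into fundamental regions $R_n$, and noting that a disjoint $\tilde Q$ is trapped in one $R_n$, is exactly the missing topological content. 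One point deserves more care than your phrasing gives it: confinement to $R_n$ pins each endpoint of $\tilde Q$ into an open interval of length $2\pi$ on the corresponding boundary line, which yields only $|\theta(Q)-\theta(P)|<1$, not equality --- two disjoint arcs with incommensurate endpoint angles genuinely can have unequal real winding numbers. Exact constancy needs the extra input (implicit in the Robertson--Seymour setup and in how the paper uses the lemma in Lemma~\ref{lem:pcc7}) that the terminals are evenly spaced and the linkage shifts every source to its sink by the same angular offset modulo $2\pi$; then $\theta(Q)-\theta(P)$ is an integer, and your strict inequality forces it to vanish. You flag this yourself as ``boundary-order bookkeeping,'' and that is indeed the only step left to write out; with it, your argument is complete and considerably more rigorous than what the paper records.
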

We distribute the terminals of the cross demands($\CD_M$) evenly on the faces $f_1$ and $f_2$ at intervals of $\frac{2\pi}{|\CD_M|}$. For convenience sake, assume that the graph is embedded such that $P_{axis}$ is a radial line. Our proofs go through even if this is not the case simply by accounting for the angle between the endpoints of the axis. The other terminals, vertices and edges of $G$ are embedded such that the graph is planar. Claim~\ref{claim:pcc3}, while not being crucial in the analysis, still helps us understand how demand edges occur in the parallel Two-Face case.
\begin{claim}\label{claim:pcc3}
For any three demand edges in $\CD_M$, all three of them cannot interlace with each other.
\end{claim}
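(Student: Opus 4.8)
The plan is to argue by contradiction inside the annular model of Lemma~\ref{lem:intuitive}, using the even placement of the terminals of $\CD_M$ and the designated radial axis $P_{axis}$ to make the winding of each demand a well-defined integer invariant. First I would fix the normalized picture: the two faces are the concentric boundary circles of the annulus $\sigma$, the $2k$ endpoints of $\CD_M$ sit at the prescribed multiples of $\frac{2\pi}{k}$ on these circles, and each cross demand $h$ is drawn as its shortest arc across $\sigma$, to which I attach (i) a signed winding $w(h)$, the net number of sectors of width $\frac{2\pi}{k}$ that the arc sweeps, and (ii) a bit recording whether the arc crosses $P_{axis}$ (this is exactly the information that the weights $y,y^{-1}$ track). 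With this normalization, interlacing of two demands becomes a purely combinatorial condition on the cyclic positions of their four endpoints across the two circles, read off as in Definition~\ref{SP} but now interpreted across the two faces.

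Assume three demands $h_1,h_2,h_3\in\CD_M$ pairwise interlace and order their outer endpoints cyclically. The plan is to translate each of the three interlacing relations into an inequality relating the angular gap between the relevant pair of outer endpoints and the difference of windings $w(h_i)-w(h_j)$; concretely, $h_i$ and $h_j$ interlace exactly when a full turn is caught strictly between their outer gap and their inner gap, so interlacing pins the relative winding into a prescribed residue window. I would then show that the three windows coming from the three pairs cannot be met simultaneously by a single assignment $w(h_1),w(h_2),w(h_3)$ once all three arcs are required to be shortest arcs on the evenly-spaced circles --- equivalently, that satisfying all three forces one of the arcs to wind across $P_{axis}$ in the sense opposite to the other two.

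Finally I would convert this forced configuration into a forbidden minor, in the spirit of the $K_{3,3}$ extraction used in the Parallel Bijection Lemma~\ref{lem:bij}. The three arcs, together with the two boundary cycles joining the endpoints on each face, are subdivided at their crossings and the branch vertices chosen among the six demand endpoints; the enforced axis-crossing supplies the extra connection that turns the three pairwise crossings into a $K_{3,3}$ (or $K_5$) minor lying entirely inside $\sigma$, hence inside a planar embedding of the modified graph, which is impossible. This contradiction proves the claim; Figure~\ref{fig:twofacepar}, which shows the single-crossing case, is the picture this argument generalizes.

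The step I expect to be the real obstacle is the middle one: setting up the winding/axis bookkeeping tightly enough that the three interlacing windows are provably jointly infeasible, and in particular handling the wrap-around case where an arc crosses $P_{axis}$. The even distribution of the terminals and the fixed axis are precisely the devices that keep each $w(h)$ well defined and the ensuing case analysis finite; this is also where the two-face setting genuinely differs from the one-face case, since the compatibility relation is no longer antisymmetric (as noted at the opening of Section~\ref{sec:twoface}), so a winding invariant, rather than the DAG/inclusion--exclusion of Lemma~\ref{lem:cancel}, must do the work.
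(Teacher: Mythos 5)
There is a genuine gap here, and your route also diverges sharply from the paper's. The paper's proof is a one-paragraph separation argument about the \emph{routing paths}, not the demand arcs: assuming $M$ supports a pure cycle cover, one obtains three pairwise disjoint paths $P_1,P_2,P_3$ in the planar graph realizing $h_1,h_2,h_3$; the two outer ones, concatenated with an arc of $f_1$ through $s_2$ and an arc of $f_2$ avoiding $t_2$, form a closed curve separating $s_2$ from $t_2$, so $P_2$ must meet $P_1$ or $P_3$, contradicting disjointness. Your proposal never brings these routing paths into play, and that is where the content of the claim lives: a pairing with three pairwise interlacing cross demands exists perfectly well as a combinatorial object, and in the two-face case the demand edges of the modified graph need not be planarly embeddable at all, so no amount of bookkeeping about how the demand \emph{arcs} wind or cross $P_{axis}$ can by itself produce a contradiction. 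What is impossible is realizing such a pairing by disjoint paths, i.e., by a pure cycle cover; that hypothesis is exactly what your argument omits.

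Beyond that, the two load-bearing steps of your plan are not carried out. The joint infeasibility of the three ``winding windows'' is the crux, and you explicitly defer it (``the step I expect to be the real obstacle''); as written it is a conjecture, not a lemma. The final minor extraction is also unsound: crossings among demand arcs drawn as chords of the annulus are crossings in a drawing you chose, not edges of a planar embedding of any graph in the instance, so they cannot yield a $K_{3,3}$ or $K_5$ minor of a planar graph and hence no contradiction. If you do want a winding-number proof, the clean route is to apply Lemma~\ref{lem:intuitive} to the routing linkage of a hypothetical pure cycle cover: all $k$ paths share a common winding number, hence every source is matched to the sink at one fixed angular offset $\Offset_M$, which forces the cross demands to be a rotation of the evenly spaced terminals and therefore pairwise non-interlacing --- a statement stronger than the claim and close in spirit to Lemma~\ref{lem:pcc7}. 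Either way, the hypothesis that $M$ admits a pure cycle cover must enter the argument explicitly.
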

\begin{proof}
Assume that the claim does not hold for three demand edges $h_1,h_2,h_3 \in \CD_M$ such that $l(s_1)<l(s_2)<l(s_3)$. Since all three edges interlace, we have that $l(t_1)>l(t_2)>l(t_3)$. If this is the case, we show that $M$ cannot support a pure cycle cover, say $CC$. Let $C_1,C_2,C_3$ be the cycles of $CC$ including the demand edges $h_1,h_2,h_3$ respectively. Since the cycle cover is pure, there exist disjoint paths, say $P_1,P_2,P_3$, between the endpoints of the three demand edges. Also consider the paths $P_4,P_5$ which are comprised of the edges of $f_1$ from $s_1$ to $s_3$ via $s_2$ and $t_1$ to $t_3$ without using $t_2$. Paths $P_1,P_3,P_4,P_5$ form a cycle in the graph with $s_2$ inside and $t_2$ outside it. Therefore, $P_2$ must intersect either $P_1$ or $P_3$ which gives a contradiction.
\end{proof}
We say that a cycle cover $CC$ \textit{effectively crosses the axis} $x$ times if the total number of times the paths in $CC$ cross $P_{axis}$ counter-clockwise is $x$ more than the total number of times they cross it in the clockwise direction. We abbreviate this by $\AxisCross_{M,CC}$. We now characterize pure cycle covers(Lemma~\ref{lem:pcc7}).
\begin{observation}\label{obs:pcc6}
If $P$ is any path(on the plane) in $G$ such that $\theta(P) = 2\pi$ then $P$ effectively crosses the axis exactly once in the counter-clockwise direction. 
\end{observation}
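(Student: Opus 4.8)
The plan is to reduce the statement to the classical topological fact that the winding number of a curve about the origin equals its signed intersection number with any ray emanating from the origin, the ray here being the radial segment $P_{axis}$. Working in the annular surface $\sigma$ of Lemma~\ref{lem:intuitive} with origin $O$ at its center, I would first fix the angle $\theta_0$ at which $P_{axis}$ is drawn (recall we may assume $P_{axis}$ is radial) and record, along the path $P = f([0,1])$, the continuous lift $\tilde\theta : [0,1] \to \R$ of the polar angle $\arg(f(x) - O)$. The total angle turned through by $OX$ is then $\tilde\theta(1) - \tilde\theta(0)$, so the hypothesis $\theta(P) = 2\pi$ states precisely that $\tilde\theta(1) - \tilde\theta(0) = 2\pi$, i.e.\ one full revolution counterclockwise.

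Next I would perturb $P$ within its homotopy class rel endpoints---which changes neither $\tilde\theta(1)-\tilde\theta(0)$ nor the net crossing count, both being homotopy invariants rel endpoints---so that it meets $P_{axis}$ transversally in finitely many interior points and so that neither endpoint lies on $P_{axis}$; the latter is automatic, since the terminals are placed strictly to the left and right of the axis. The crucial geometric observation is that, because $P_{axis}$ spans \emph{every} radius $1 \le r \le 2$ at the single angle $\theta_0$ while $P$ stays inside $\sigma$, an interior point $x$ is a crossing of $P_{axis}$ if and only if $\tilde\theta(x) \equiv \theta_0 \pmod{2\pi}$, and the crossing is counterclockwise exactly when $\tilde\theta$ is increasing there and clockwise exactly when it is decreasing. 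Hence the effective crossing count $\AxisCross$ of $P$, namely (counterclockwise crossings) $-$ (clockwise crossings), equals the net signed number of times the function $g(x) := (\tilde\theta(x) - \theta_0)/(2\pi)$ passes through an integer value.

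Finally I would invoke the elementary telescoping identity that, for a continuous $g$ with non-integer endpoints, this net signed integer-crossing count equals $\lfloor g(1)\rfloor - \lfloor g(0)\rfloor$. Since $g(1) - g(0) = (\tilde\theta(1)-\tilde\theta(0))/(2\pi) = 1$ and $\lfloor y + 1 \rfloor = \lfloor y\rfloor + 1$ for every real $y$, this difference is exactly $1$, yielding one net counterclockwise crossing, as claimed. I expect the main thing to get right to be the orientation bookkeeping in the crucial observation---pinning down that an upward pass of $\tilde\theta$ through a level $\equiv \theta_0$ corresponds, under the chosen convention (primal arcs crossing from $f_1$ toward $f_2$ being counterclockwise and weighted by $y$), to a counterclockwise crossing of the axis; once that sign convention is fixed, the rest is the standard winding-number-equals-intersection-number argument together with the one-line floor computation.
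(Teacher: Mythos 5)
Your proof is correct and rests on the same underlying idea as the paper's sketch: track the continuous (lifted) angle function along $P$ and detect its passages through the axis angle $\theta_0$. The paper's version stops at the intermediate value theorem, which by itself only yields \emph{at least} one crossing, and then asserts the conclusion, whereas your floor-function telescoping identity $\lfloor g(1)\rfloor-\lfloor g(0)\rfloor=1$ actually pins down that the \emph{net} signed crossing count is exactly one --- so your write-up is, if anything, more complete than the published proof sketch.
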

\begin{proof}(Sketch)
We know that $\theta$ is a continuous function and its evaluations at the start and end of $P$ are zero and $2\pi$ respectively. By the intermediate value theorem, it follows that on some point of $P$, $\theta$ takes on the value $\theta_0$ where $\theta_0$ which is the angle between the start of $P$ and any point on $P_{axis}$. Since the direction of measurement is counter-clockwise, we conclude that $P$ must cross $P_{axis}$ exactly once in the counter-clockwise direction. 
\end{proof}
\begin{lemma}\label{lem:pcc7}
Assuming $\CD_M \neq \emptyset$, for any pure cycle cover $CC$, $\AxisCross_{M,CC} = \omega|\CD_M| + \Offset_M$ for some integer $\omega$ and a fixed integer $\Offset_M \in \{0,1,\ldots,|CD_M|-1\}$. 
\end{lemma}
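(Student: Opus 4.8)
The plan is to show that the axis-crossing number of a pure cycle cover is pinned down modulo $|\CD_M|$ by the pairing $M$ alone, independent of which cover we take. First I would extract, from an arbitrary pure cycle cover $CC$, the $k = |\CD_M|$ \emph{routing paths} $P_1,\dots,P_k$ obtained by deleting the cross demand edges from the cycles passing through them. Since these cycles are vertex-disjoint and the modified graph is embedded in the plane, the $P_i$ are pairwise non-crossing curves, i.e. they form a linkage $\calL$ in the annulus $\sigma$, each running from a terminal on one face to its $M$-mate on the other. By Lemma~\ref{lem:intuitive} they all share a common winding number $\theta(\calL)$.

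Next I would argue that, as far as the count modulo $|\CD_M|$ is concerned, only these $k$ paths matter. Any non-trivial cycle of $CC$ avoiding all terminals is a closed curve in $\sigma$; were it non-contractible it would wind once around the hole and hence separate $f_1$ from $f_2$, forcing every routing path (there is at least one, as $\CD_M\neq\emptyset$) to cross it, which is impossible for vertex-disjoint subgraphs of a plane graph. Thus every terminal-free cycle is contractible, has winding number $0$, and crosses $P_{axis}$ a net of zero times. Consequently $\AxisCross_{M,CC} = \sum_{i=1}^{k} c_i + \delta_M$, where $c_i$ is the net counter-clockwise crossing number of $P_i$ and $\delta_M$ collects any contribution fixed by $M$ itself (e.g. from the demand edges), which will cancel in the difference below.

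The heart of the argument is a difference computation. Fix a reference pure cycle cover $CC'$ of $M$, with routing paths $P_i'$ and winding number $\theta(\calL')$. Because $M$ is fixed, $P_i$ and $P_i'$ have the same endpoints, so their winding numbers have equal fractional part and $\theta(\calL)-\theta(\calL')\in\mathbb{Z}$. By the natural extension of Observation~\ref{obs:pcc6}, the net crossing number of a path depends only on its endpoints and its winding number, and raising the winding number by an integer $n$ raises the net crossing number by exactly $n$; hence $c_i - c_i' = \theta(\calL)-\theta(\calL')$ for every $i$. Summing over the $k$ paths (and cancelling $\delta_M$) gives $\AxisCross_{M,CC} - \AxisCross_{M,CC'} = k\,(\theta(\calL)-\theta(\calL'))$, a multiple of $k = |\CD_M|$.

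Therefore all pure cycle covers of $M$ share a common residue modulo $|\CD_M|$; defining $\Offset_M\in\{0,\dots,|\CD_M|-1\}$ to be this residue and $\omega = (\AxisCross_{M,CC}-\Offset_M)/|\CD_M|$ yields the claim. I expect the two topological inputs to be the main obstacle: rigorously ruling out winding terminal-free cycles via the annulus separation (Jordan-curve) argument, and making precise the orientation bookkeeping behind ``the net crossing number increases by $n$ when the winding number increases by $n$,'' so that the signs match the convention of $\AxisCross$ and Observation~\ref{obs:pcc6}. The evenly-spaced placement of terminals is not needed for this modular structure, but can be exploited afterwards (via a Hermite-type summation of the $c_i$) to compute the explicit value of $\Offset_M$.
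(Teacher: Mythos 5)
Your proof is correct, but it reaches the conclusion by a genuinely different route from the paper's. The paper argues \emph{absolutely}: by Lemma~\ref{lem:intuitive} every routing path of a pure cycle cover has the same winding angle $2\pi\omega+\theta_0$, and because the terminals of $\CD_M$ are placed at even intervals of $2\pi/|\CD_M|$, the angular displacement $\theta_0$ from a source to its $M$-mate equals $2\pi\Offset_M/|\CD_M|$ for an explicit $\Offset_M$ read off from $M$; summing over the $|\CD_M|$ paths and applying Observation~\ref{obs:pcc6} gives $\AxisCross_{M,CC}=\omega|\CD_M|+\Offset_M$ directly. You instead argue \emph{relatively}: you compare an arbitrary pure cycle cover against a fixed reference one and show the difference of their crossing numbers is $|\CD_M|\cdot(\theta(\mathcal{L})-\theta(\mathcal{L}'))$, an integer multiple of $|\CD_M|$, and then \emph{define} $\Offset_M$ as the common residue. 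Your version supplies two things the paper leaves implicit --- a justification that terminal-free cycles contribute zero net crossings (the annulus separation argument), and independence from the even-spacing normalization --- and both are worth having. What it gives up is the explicit value of $\Offset_M$: the downstream Lemma~\ref{lem:AxisCross} must certify that a routing realizing a \emph{different} pairing $M'$ lands in a residue class $\Offset_{R'}\neq\Offset_M$, and that comparison is precisely where the explicit formula $\theta_0=2\pi\Offset_M/|\CD_M|$, hence the even spacing, is actually used. You flag this yourself, so it is a deferral rather than a gap, but the explicit computation cannot be postponed past Lemma~\ref{lem:AxisCross}.
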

\begin{proof}
We only have to show that the cross demands must contribute to $\AxisCross_{M,CC}$ by an amount of $\omega|\CD_M| + \Offset_M$. 
As $CC$ is a pure cycle cover, we know from Lemma~\ref{lem:intuitive} that each path between a terminal pair traverses the same angle, say $\theta = 2\pi\omega$ for some integer $\omega$. Since each path traverses the same angle, each source terminal is routed to its corresponding sink terminal which is shifted by an angle of $\theta_0 \in [0,2\pi)$ and therefore, $\theta_0$ can be written as $2\pi\frac{\Offset_M}{\CD_M}$ where $\Offset_M \in \{0,1,\ldots,|\CD_M-1|\}$ is the common offset. Observe that the offset is dependent only on the pairing $M$ and is not related to the cycle cover. 
Summing this angle for all demand edges in $\CD_M$, the total angle traversed by the corresponding paths in $CC$ is simply $\theta|\CD_M| = 2\pi\omega|\CD_M| + 2\pi\Offset_M$. From Observation~\ref{obs:pcc6} every time an angle of $2\pi$ is covered, we effectively cross the axis exactly once. Thus the value of \AxisCross~due to the demands in $\CD_M$ is $\omega|\CD_M| + \Offset_M$.
\end{proof}
\subsection{Pruning Bad Cycle Covers}
As a consequence of the topology of the One-Face case, the compatibility relation for pairings is antisymmetric and therefore a straightforward inclusion-exclusion is enough to cancel all the ``bad'' cycle covers. In the two face case, there may exist a set of compatible pairings which yield routings of each other in the determinant, thus making it impossible to cancel ``bad'' cycle covers. Therefore, we must make distinction between compatible pairings which yield pure cycle covers and the ones which yield ``bad'' cycle covers.
\begin{definition}[Compatibility \& \M-Compatibility]\label{TColl}
Consider two pairings $M,M'$. We say that $M'$ is compatible with $M$ if there exists a routing $R'$ yielding a pure cycle cover for $M'$, which when combined with the demand edges of $M$, forms a cycle cover, denoted by $CC_{R'}$. Moreover, if $CC_{R'}$ satisfies the following property, we say $M'$ is \M-compatible for $M$.
\begin{equation}
\tag{Modular Property}
\AxisCross_{M,CC_{R'}} \equiv \Offset_M(\bmod\mbox{ }|\CD_M|)
\label{MmodP}
\end{equation}
\end{definition}
From Lemma~\ref{lem:pcc7}, it is clear that $M$ is \M-compatible with itself. We now show that any other $M' \neq M$ is not \M-compatible with $M$.
\begin{lemma}\label{lem:AxisCross}
For any routing $R'$ corresponding to a pairing $M'$ such that $M'\neq M$, $$\AxisCross_{M,CC_{R'}} \not\equiv \Offset_M(\bmod\mbox{ }|\CD_M|)$$
\end{lemma}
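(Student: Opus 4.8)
The plan is to pin down the residue of $\AxisCross_{M,CC_{R'}}$ modulo $|\CD_M|$ entirely from the pairing $M'$, exactly as Lemma~\ref{lem:pcc7} did for $M$, and then to argue that this residue cannot coincide with $\Offset_M$ unless $M'=M$. So the proof splits into a rigidity step (what $M'$ must look like) and a counting step (re-running the angle-summation of Lemma~\ref{lem:pcc7}).

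First I would use the standing hypothesis that $CC_{R'}$ is defined at all: by Definition~\ref{TColl} this forces $R'$ to be a pure cycle cover for $M'$, so its $|\CD_M|$ terminal-to-terminal paths are pairwise disjoint and hence form a linkage on the surface $\sigma$. By Lemma~\ref{lem:intuitive} all of these paths share one common winding number $\Theta':=\theta(R')$. Writing $\alpha(\cdot)$ for angular position, this constancy gives $\alpha(t')-\alpha(s')\equiv\Theta'\pmod{2\pi}$ for every pair $(s',t')$ of $M'$. Since the terminals of $\CD_M$ are placed at the equally spaced angles (the multiples of $2\pi/|\CD_M|$), the common shift $\theta_0':=\Theta'\bmod 2\pi$ must itself be an integer multiple $2\pi\,\Offset_{M'}/|\CD_M|$ of that spacing, and moreover $M'$ is forced to be exactly the pairing that sends each terminal to the one sitting $\theta_0'$ further counter-clockwise. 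In other words $M'$ is the rotation singled out by its offset $\Offset_{M'}\in\{0,\dots,|\CD_M|-1\}$, so the offset recovers the whole pairing; consequently $M'\neq M$ implies $\Offset_{M'}\neq\Offset_M$.

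Next I would recompute the axis crossings of $CC_{R'}$ by re-running the argument of Lemma~\ref{lem:pcc7}, now with $M'$'s constant winding and offset in place of $M$'s. The total angle swept by the $|\CD_M|$ routing paths of $R'$ is $|\CD_M|\,\Theta' = 2\pi\,\Offset_{M'}+2\pi\,|\CD_M|\,m'$ for the common extra winding $m'$, and Observation~\ref{obs:pcc6} turns each full $2\pi$ of swept angle into one net counter-clockwise crossing of $P_{axis}$; as in Lemma~\ref{lem:pcc7} the crossings counted are those of the routing. The only remaining segments are the fixed demand edges of $M$, whose contribution is independent of $R'$ and, by specializing to the already-handled case $M'=M$ of Lemma~\ref{lem:pcc7}, is congruent to $0$ modulo $|\CD_M|$. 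Hence $\AxisCross_{M,CC_{R'}}\equiv\Offset_{M'}\pmod{|\CD_M|}$, and combining with the rigidity step yields $\AxisCross_{M,CC_{R'}}\equiv\Offset_{M'}\not\equiv\Offset_M\pmod{|\CD_M|}$, since both offsets lie in $\{0,\dots,|\CD_M|-1\}$ and are distinct.

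The step I expect to be the main obstacle is the rigidity claim: that a pure cycle cover for $M'$ can exist only when $M'$ is the constant-shift rotation fixed by its offset. This is precisely where the constant-winding guarantee of Lemma~\ref{lem:intuitive} has to be combined with the even spacing of the cross-demand terminals, and it is the place where the two-face topology (rather than the DAG/antisymmetry structure available in the one-face case) is doing the real work. Everything after that is a faithful re-application of the Lemma~\ref{lem:pcc7} machinery, so once rigidity is established the congruence and the final inequality follow routinely.
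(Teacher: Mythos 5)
Your proof is correct and follows essentially the same route as the paper's: invoke Lemma~\ref{lem:intuitive} to get a common winding number for the routing paths, use the even spacing of the cross-demand terminals to express the common shift as $2\pi\Offset_{M'}/|\CD_M|$ with $\Offset_{M'}\neq\Offset_M$, sum the angles and apply Observation~\ref{obs:pcc6} to read off $\AxisCross_{M,CC_{R'}}\equiv\Offset_{M'}\pmod{|\CD_M|}$. Your explicit ``rigidity'' step (the offset determines the pairing, hence $M'\neq M$ forces $\Offset_{M'}\neq\Offset_M$) is only a more careful spelling-out of what the paper asserts in one line.
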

\begin{proof}
Let $\{P_1,P_2,\ldots,P_{k}\}$ be $k$ disjoint paths in the routing $R'$. Next, we use Lemma ~\ref{lem:intuitive} to say that each path in the set must have the same angle as seen from the center of the concentric faces. Since the routing does not lead to a pure cycle cover of $M$, each source terminal is routed to a sink terminal which is shifted by an angle of $\theta'_0 \in [0,2\pi)$ and therefore, $\theta'_0$ can be written as $2\pi\frac{\Offset_{R'}}{\CD_M}$ where $\Offset_{R'} \in \{0,1,\ldots,|\CD_M-1|\}\backslash\{\Offset_M\}$ is the common offset that each path traverses. Notice that pure cycle covers will have an offset of $\Offset_M \neq \Offset_{R'}$ since in the pure case, the offset between the source and sink must be different from that of the offset of $O_{R'}$, otherwise $R'$ would be a pure cycle cover. Therefore,
\begin{align}
\theta(P_1)&=\theta(P_2)=\ldots=\theta(P_{k})= 2\omega\pi + \theta'_0\\
\implies \theta(\bigcup_{i=1}^{k}P_i) &= 2\pi(\omega.|CD_M| + \Offset_{R'})
\end{align}
From Observation~\ref{obs:pcc6} every time an angle of $2\pi$ is covered, we effectively cross the axis exactly once. Thus the value of \AxisCross\mbox{ } due to the routing $R'$ is $2\omega|\CD_M| + \Offset_{R'}$. Since, $\Offset_{R'} \not\equiv \Offset_M \bmod\mbox{ }|\CD_M|$, we conclude that $R'$ does not satisfy \eqref{MmodP}.
\end{proof}
\begin{theorem}\label{thm:equiLenM}
Let $M,M'$ be two $2$-face pairings such that $M'$ is \M-compatible for $M$. Then it must be the case that $M = M'$.
\end{theorem}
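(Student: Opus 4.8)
The plan is to obtain Theorem~\ref{thm:equiLenM} as essentially the contrapositive of Lemma~\ref{lem:AxisCross}, once the definitions are lined up. First I would unpack what it means for $M'$ to be \M-compatible for $M$: by Definition~\ref{TColl} there exists a routing $R'$ that yields a pure cycle cover for $M'$ and whose combination with the demand edges of $M$ forms a cycle cover $CC_{R'}$ satisfying the Modular Property, namely $\AxisCross_{M,CC_{R'}} \equiv \Offset_M \pmod{|\CD_M|}$. Since $M$ is a two-face pairing in which every demand edge is a cross demand, we have $|\CD_M| = k \geq 1$, so the modulus is well defined and the hypothesis $\CD_M \neq \emptyset$ of Lemma~\ref{lem:pcc7} (and hence the offset $\Offset_M$) is meaningful.

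Next I would argue by contradiction. Suppose $M' \neq M$. The routing $R'$ above corresponds to the pairing $M'$, so it is exactly the kind of routing to which Lemma~\ref{lem:AxisCross} applies. That lemma, analysing the common winding number of the $k$ disjoint paths of $R'$ through Lemma~\ref{lem:intuitive} and converting angle into axis crossings via Observation~\ref{obs:pcc6}, shows that such a routing crosses the axis with a residue governed by an offset $\Offset_{R'} \neq \Offset_M$, whence $\AxisCross_{M,CC_{R'}} \not\equiv \Offset_M \pmod{|\CD_M|}$. This directly contradicts the Modular Property guaranteed by \M-compatibility. Hence the assumption $M' \neq M$ is untenable and $M = M'$.

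The only point demanding care --- and it is bookkeeping rather than a genuine obstacle, since the substantive work is already discharged in Lemmas~\ref{lem:pcc7} and~\ref{lem:AxisCross} --- is to confirm that the routing $R'$ witnessing \M-compatibility is precisely the object constrained by Lemma~\ref{lem:AxisCross}: it corresponds to a pairing $M' \neq M$, and the cycle cover $CC_{R'}$ it forms with the demand edges of $M$ is the one whose axis-crossing count the lemma pins down modulo $|\CD_M|$. Once this identification is made, the two modular conditions --- one asserted by the definition, one forbidden by the lemma --- collide, and the theorem follows. Conceptually this establishes the two-face analogue of the antisymmetry of compatibility from the one-face case (Lemma~\ref{comp}): the winding-number artifice prunes every \M-compatible competitor except $M$ itself, which is exactly what lets the determinant-based count isolate the pure (good) cycle covers.
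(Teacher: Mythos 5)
Your proposal is correct and follows exactly the paper's route: the paper proves Theorem~\ref{thm:equiLenM} by observing it is a direct consequence of Lemma~\ref{lem:pcc7} (which gives the modular property for $M$ itself, making $\Offset_M$ well defined) and Lemma~\ref{lem:AxisCross} (which forbids it for any routing corresponding to $M'\neq M$). Your write-up merely makes explicit the contradiction between the Modular Property asserted by Definition~\ref{TColl} and the incongruence guaranteed by Lemma~\ref{lem:AxisCross}, which is precisely the intended argument.
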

Theorem~\ref{thm:equiLenM} is a consequence of Lemma~\ref{lem:pcc7} and Lemma~\ref{lem:AxisCross}. Now, we describe a computational procedure using which we can obtain only the pure cycle covers of $M$.
\subsection{Determinant from product of matrices}\label{subsec:MV97}
The determinant of an integer matrix is complete for the class \Gl \cite{Damm,Toda,Vinay} and Mahajan-Vinay~\cite{MV} give a particularly elegant proof of this
result by writing the determinant of an $n \times n$ matrix
as the difference of two entries of a product of $n+1$ matrices of size
$2n^2 \times 2n^2$. By a simple modification of their proof we can obtain
each coefficient of the determinant - which is a univariate polynomial  
(or in fact any polynomial with constant variables) - in \Gl. One way to do
so is to evaluate the polynomial at several points and then interpolate.
\begin{remark}\label{rem:twofaceseq}
We can adapt the sequential algorithm for the One-face case as follows. We first compute the determinant. The determinant in this case, is a bivariate polynomial. Since we only care for exponents of $y$ to be modulo $k$, we may evaluate this polynommial in $y$ at all the $k^{th}$ roots of unity. Upon taking their sum, all the monomials whose exponents are not equivalent to $0$ modulo $k$ cancel out. We can divide the resulting polynomial by $k$ to preserve the coeffecients. We can now evaluate the polynomial at $n$ points and then interpolate as we do in the One-Face Case. See Remark~\ref{rem:count}. This gives us the same complexity as in the One-face case, with an additional blow-up of $k$. In order to do this, we need to shift to a model of computation which allows us to approximately evaluate polynomials at imaginary points.
\end{remark}
\noindent\subparagraph*{Computing the univariate polynomial in the two-face case.}
We briefly review the algorithm described by Mahajan and Vinay\cite{MV} to compute the determinant.
Instead of writing down the determinant as a sum of cycle covers, they write it as a sum of clow
sequences. A clow sequence which generalises from a cycle cover allows walks that may
visit vertices many times as opposed to cycles where each vertex is visited exactly once(for more
details see\cite{MV}). Even though the determinant is now written as a sum over many more terms,
they show an involution where any clow sequence which is not a cycle cover cancels out with 
a unique ``mate'' clow sequence which occurs with the opposite sign. In order to implement this 
determinant computation as an
algorithm, each clow which can be realised as a closed walk in the graph is computed in a
non-deterministic manner.

Our only modification to the algorithm is as follows: in each non-deterministic
path, we maintain a $O(log(k))$-bit counter which counts the number of times edges from $P_{axis}$
have been used in the clow sequence so far modulo $k$. In other words, everytime the counts exceeds $k$,
we shift the counter to $0$. At the end of the computation, the number in this counter 
is exactly the exponent of $y$ modulo $k$. It is easy to see that clow sequences which 
are not cycle covers, still
cancel out because, in a clow sequence and its mate the set of directed edges traversed is the same.
Consequently, at the end of the computation of each clow sequence, a clow and its made get the same 
exponent in $y$ modulo $k$. This can be done in \Gl~as described in \cite{MV}. 
\section{Proof of the Main Theorem}\label{sec:mainproof}
We split the proof of the Theorem~\ref{thm:main} into three parts. Theorem~\ref{thm:oldMain} gives a proof of the count and the associated sequential and parallel complexity bounds. In Subsection~\ref{sub:ptime} we describe how to do search in polynomial time. Lastly, in Subsection~\ref{sub:RNC} we show that search can be done in \RNC. Observe that for the decision version of the shortest $k$-DPP, it suffices to check whether the polynomial obtained by the signed sum of determinants is non-zero or not.
\subsection{Proof of Main Theorem: Count in \NCt}
\begin{theorem}\label{thm:oldMain}
Given an undirected planar graph $G$ with $k$ pairs of source and sink 
vertices lying on a single face $F$, then the count of all shortest 
$k$-disjoint paths between the terminals can be found in time
$O(k^2 + \log^2{n})$ parallel time using $4^{O(k)}n^{O(1)}$ processors.
It can also be found in sequential time $O(4^k n^{\omega+1})$. 

When the terminals are distributed such that all the sources are on one face and all the sinks are on the other, the count of all shortest $k$-DPP can be obtained in sequential time $O(kn^{\omega +1})$. 
\end{theorem}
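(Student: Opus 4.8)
The plan is to reduce the count of shortest $k$-disjoint paths to a signed combination of determinants, one per relevant pairing, and then read the answer off a single coefficient. I would begin with the one-face case. Apply the modification step of Section~\ref{sec:det} to the input pairing $M_0$, obtaining the weighted adjacency matrix $B = D + xA$; by Lemma~\ref{lem:inj} the good monomials of $\det(B)$ are in weight-preserving bijection with the good cycle covers, i.e.\ those whose routing realises $M_0$, with the degree in $x$ recording total length. The quantity I actually want is $W(\ang{M_0,M_0})$, the signed weighted sum of precisely these good cycle covers, since its lowest-degree term isolates the shortest good covers. The difficulty is that the raw determinant computes $W(\ang{M_0,*})$, which also contains ``bad'' routings corresponding to other pairings. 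To remove them I would invoke the Cancellation Lemma~\ref{lem:cancel} repeatedly: it rewrites $W(\ang{M_0,M_0})$ as $W(\ang{M_0,*})$ plus a signed sum of $W(\ang{M',M'})$ over strictly longer pairings $M'$, and the recursion terminates because lengths are bounded and the parallel pairings, being of maximal length, are compatible only with themselves (Lemma~\ref{lem:bij}). Unrolling this telescoping recursion over the compatibility DAG expresses $W(\ang{M_0,M_0}) = \sum_i \varepsilon_i\, W(\ang{M_i,*})$ with signs $\varepsilon_i \in \{+1,-1\}$, where the $M_i$ range over the pairings reachable from $M_0$ and each $W(\ang{M_i,*}) = \det(B_{M_i})$ is a single determinant.

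Next I would extract the answer and account for the cost. Each $\det(B_{M_i})$ is a univariate polynomial in $x$; by the Mahajan--Vinay construction (Subsection~\ref{subsec:MV97}) every one of its coefficients is computable in $\Gl$, and sequentially the whole polynomial is obtained by evaluating at $O(n)$ points and interpolating, each point costing one $n\times n$ integer determinant in $O(n^\omega)$ time, for $O(n^{\omega+1})$ per pairing. Since the number of reachable pairings is at most the number of non-crossing pairings on the $2k$ terminals, the Catalan number $O(4^k)$, forming the signed sum costs $O(4^k n^{\omega+1})$ time sequentially and fits in $4^{O(k)}n^{O(1)}$ processors in parallel; the coefficients sit in $\Gl \subseteq \NCt$, i.e.\ $O(\log^2 n)$ depth, while assembling the signs and telescoping over the DAG, whose depth is bounded by the $O(k^2)$ distinct pairing-lengths, contributes the $O(k^2)$ term, giving $O(k^2+\log^2 n)$ parallel time. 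Finally, in the assembled polynomial $W(\ang{M_0,M_0})$ all bad covers have cancelled, so its lowest-degree monomial counts exactly the shortest good cycle covers; by Lemma~\ref{lem:shortGood} these share a single sign $(-1)^{k-\ell}$, hence no cancellation occurs among them and the count is recovered (up to this known sign) as that lowest coefficient, which by Lemma~\ref{lem:inj} equals the number of shortest $k$-disjoint paths for $M_0$.

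For the two-face case the compatibility relation is no longer antisymmetric, so the DAG/inclusion--exclusion collapses and I would instead use the winding-number argument. After the modification of Section~\ref{sec:twoface}, with the axis path $P_{axis}$ weighting arcs by $y^{\pm1}$, the determinant becomes a bivariate polynomial whose monomials carry $x^{\mathrm{length}}y^{\AxisCross_{M,CC}}$. By Lemma~\ref{lem:pcc7} a pure cycle cover for the single fixed pairing $M$ has $\AxisCross \equiv \Offset_M \pmod{|\CD_M|}$, and by Theorem~\ref{thm:equiLenM} (through Lemma~\ref{lem:AxisCross}) every routing of a different pairing violates this congruence; thus selecting the residue class $\Offset_M$ of the $y$-exponent modulo $k$ isolates exactly the good cycle covers. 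I would perform this selection by evaluating $y$ at the $k$-th roots of unity and averaging, as in Remark~\ref{rem:twofaceseq}, so that all off-residue monomials cancel, and then read off the lowest $x$-degree coefficient as before. Because only the single pairing $M$ is involved, this needs one bivariate determinant evaluated at $k$ roots of unity, each a univariate-in-$x$ determinant costing $O(n^{\omega+1})$, for a total of $O(k n^{\omega+1})$.

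The main obstacle I anticipate is not any single step but the bookkeeping that guarantees \emph{exact} cancellation. In the one-face case I must ensure that the telescoping of Lemma~\ref{lem:cancel} leaves behind only genuinely good covers with a uniform sign, so that the lowest coefficient is a clean count rather than a difference, and that the reachable pairings really number $O(4^k)$ so the processor and time bounds hold. In the two-face case the delicate point is that the modular test on $\AxisCross$ is both necessary and sufficient to separate good from bad routings, which is precisely where Lemma~\ref{lem:pcc7} and Theorem~\ref{thm:equiLenM} must be applied without slack.
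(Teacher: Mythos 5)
Your proposal follows the paper's own proof essentially verbatim: the Cancellation Lemma telescoped over the compatibility DAG down to the parallel pairings, a Catalan-number bound of $4^k$ on the pairings, evaluation--interpolation of each determinant polynomial at $n$ points for the $O(4^k n^{\omega+1})$ sequential bound and $\Gl\subseteq\NCt$ for the parallel bound, and the roots-of-unity / $\AxisCross$ residue-class selection of Remark~\ref{rem:twofaceseq} for the two-face case. The only cosmetic difference is that the paper additionally notes the unrolling can be phrased as inverting an upper-triangular $4^k\times 4^k$ compatibility matrix, which changes nothing substantive.
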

\begin{proof}
In the one face case, the Cancellation Lemma~\ref{lem:cancel} 
allows us to cancel out all cycle covers that are not good
(i.e. those which do not correspond to the input
terminal pairing $M_0$) and replace them by a signed sum of
$W(\ang{M,*})$ for various matchings and also $W(\ang{P,P})$ where 
$P$ is the unique parallel pairing. This replacement can be done
in time linear in the total number of possible terms. 

Observe that there are at most $4^k$ different pairings possible
(since they correspond to outerplanar matchings which are bounded in number
by the Catalan number $\frac{1}{k+1}{{{2k}\choose{k}} < 4^k}$ 
see e.g. \cite{Hernando2002}) We obtain the count itself by evaluating the polynomial at $n$ points followed by interpolation(see Remark~\ref{rem:count}). This accounts for a blow-up of $n$ in the sequential running time. 
Notice that we assume that the determinant of an $n \times n$ matrix can
    be computed in matrix multiplication time $n^\omega$ \cite{AHU}.

Alternatively pairings can be indexed by $k$-bit
numbers. We can build a matrix indexed by $M,M'$ and containing
zero if $\ang{M,M'}$ is not a compatible pair and the 
 sign with which $W(\ang{M',M'})$ occurs in the expression for 
$W(\ang{M,*})$, otherwise.
The matrix is of size $4^k \times 4^k$. This represents an 
system of equations $Cx = b$ (where $C$ is the compatibility
matrix above and entries of column vector $b$ are $W(\ang{M,*})$.   
Notice that the system is upper triangular 
because $\mylen{M} < \mylen{M'}$ for compatible $\ang{M,M'}$. 
Also along the diagonal we have $\pm 1$'s
because $W(\ang{M,M})$ always occurs in the expression for $W(M,*)$.
Thus the determinant of $C$ is $\pm 1$. We can invert the matrix in \NC.

In the two face case, instead of directly computing the determinant, we use the computation described in Subsection~\ref{subsec:MV97}. Remark~\ref{rem:twofaceseq} tells us that we can obtain a polynomial such that the exponent in the variable $y$ is at most $k-1$. In this polynomial we look for the terms whose exponent in $y$ is equal to $\Offset_{M}$ and among these terms we extract the monomial with the smallest exponent in $x$ to obtain the shortest pure cycle covers.
\end{proof}
\begin{remark}\label{rem:count}
In order to obtain the count, we need to use polynomial interpolation. The following fact together with Theorem~\ref{thm:oldMain} establishes that count can be obtained in \NCt.
\end{remark}
\begin{fact}\label{fact:interp}[Folklore \cite{CaiSivakumar,Tzameret}]
Polynomial interpolation i.e. obtaining the coefficients of a univariate polynomial given
its value at sufficiently many (i.e. degree plus one) points is in $\TCz \subseteq \NCo$.
It is also in $O(n\log{n})$ time (where $n$ is the degree of the polynomial) via Fast Fourier Transform.
\end{fact}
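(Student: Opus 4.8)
The plan is to recast interpolation as a linear system and then lean on known bounds for iterated integer arithmetic. Given the values $y_0,\dots,y_d$ of a degree-$d$ polynomial $p$ at distinct points $x_0,\dots,x_d$, its coefficient vector $\mathbf{c}$ is the unique solution of the Vandermonde system $V\mathbf{c}=\mathbf{y}$ with $V_{ij}=x_i^{\,j}$; invertibility is guaranteed because the $x_i$ are distinct. Rather than invert $V$ head-on, I would work with the Lagrange form $p(x)=\sum_i y_i\,\ell_i(x)$, where $\ell_i(x)=\prod_{j\ne i}(x-x_j)/\prod_{j\ne i}(x_i-x_j)$, and extract each coefficient as $[x^m]p=\sum_i y_i\,[x^m]\ell_i(x)$.

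For the $\TCz$ bound the point is that every quantity above is assembled from three primitives: iterated product of integers, iterated sum of integers, and integer division. First I would compute, for each $i$, the coefficients of the numerator polynomial $\prod_{j\ne i}(x-x_j)$; this is an iterated multiplication of linear factors, which by Kronecker substitution (packing the coefficients into one large integer) reduces to iterated integer multiplication. The scalar denominators $\prod_{j\ne i}(x_i-x_j)$ are likewise iterated products, after which one divides and then sums the terms $y_i\,[x^m]\ell_i(x)$ over $i$. The whole procedure therefore lies in $\TCz$ as soon as iterated product, iterated sum and division of integers are in $\TCz$, which is precisely the content of the Hesse--Allender--Barrington results on uniform $\TCz$ arithmetic. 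Picking small integer evaluation points, say $0,1,\dots,d$, keeps all intermediate operands of polynomial bit-length, so the circuits stay polynomial-size; the stated inclusion $\TCz\subseteq\NCo$ is then standard.

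For the $O(n\log n)$-time statement I would specialise the evaluation points to the $(d+1)$-st roots of unity. The evaluation map is then exactly the discrete Fourier transform, so interpolation is its inverse, computable by a single inverse FFT in $O(n\log n)$ arithmetic operations. For arbitrary points one falls back to fast multipoint interpolation at the cost of an extra logarithmic factor, but the root-of-unity case --- which is available here since we control the evaluation points --- yields the claimed bound.

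I expect the main obstacle to be the $\TCz$ bound, and in particular the need to certify that iterated integer multiplication and division live inside uniform $\TCz$: this is a genuinely deep fact rather than a routine circuit construction, and it is exactly what separates the sharper $\TCz$ claim from an easy $\NCo$-style argument. The secondary technical care is bounding intermediate bit-lengths so that the $\TCz$ circuits remain of polynomial size, which is dispatched by the choice of small evaluation points together with the polynomial bound on the coefficients of $p$.
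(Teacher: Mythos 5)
The paper offers no proof of this fact---it is cited as folklore---so there is nothing internal to compare against; your argument is the standard one that the cited references formalize, and it is essentially correct. Reducing Lagrange interpolation to iterated integer product, iterated sum, and division, and then invoking the Hesse--Allender--Barrington results placing these in uniform \TCz, is exactly the right route, and choosing the evaluation points $0,1,\dots,d$ does keep all intermediate operands of polynomial bit-length. Two small points to tidy: the individual terms $y_i\,[x^m]\ell_i(x)$ need not be integers, so rather than dividing term by term you should sum numerators over a common denominator such as $\prod_i D_i$ and perform a single exact integer division at the end (exactness is guaranteed since the final coefficients are integers, or rationals with known bounded denominator); and the $O(n\log n)$ bound at roots of unity counts arithmetic operations and presupposes a model permitting exact or sufficiently precise arithmetic with complex roots of unity---a caveat the paper itself flags in Remark~\ref{rem:twofaceseq}---or else a number-theoretic transform over a suitable finite field.
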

\subsection{Proof of Main Theorem: Search in \Pt}\label{sub:ptime}
Let $C_{tot}$ be the count of total number of shortest $k$-disjoint 
paths in $G$.  For every edge $e \in G$ we remove $e$ and count the 
remaining number of shortest $k$-disjoint paths using the 
sequential counting procedure 
as oracle.
Let $C_{\bar{e}}$ be this count. If $C_{\bar{e}} > 0$, we proceed with 
the graph $G \setminus e$ since the graph still has a shortest 
$k$-disjoint path. If $C_{\bar{e}} = 0$ then every existing shortest 
$k$-disjoint paths contains the edge $e$ so keep $e$ in $G$ and 
proceed with the next edge. Let $H$ be the final graph obtained.
\begin{lemma}\label{lem:seq}
The graph $H$ is a valid shortest $k$-disjoint path.
\end{lemma}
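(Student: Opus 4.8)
The plan is to establish two properties of the final graph $H$: \emph{feasibility} (that $H$ still contains a shortest $k$-disjoint path system) and \emph{minimality} (that $H$ carries no edge lying outside a single such system), and then combine them. Throughout I would fix the target length $\ell$ to be the minimum total length of a $k$-disjoint path system in the original graph $G$, computed once at the outset, and read $C_{\bar{e}}$ as the number of $k$-disjoint path systems of total length \emph{exactly} $\ell$ that survive in the current graph after deleting $e$. This is extracted from the counting oracle of Theorem~\ref{thm:oldMain} by reading off the coefficient of $x^{\ell}$ in the good polynomial rather than merely its least-degree term, a point I return to below.

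First I would prove the invariant that every graph produced during the sweep contains at least one length-$\ell$ system, by a routine induction on the processed edges $e_1,e_2,\ldots$. The base case is $G$ itself, which has $C_{tot}>0$ such systems. For the inductive step, let $G'$ be the current graph (containing a length-$\ell$ system by hypothesis) and let $e$ be the edge under consideration: if $C_{\bar{e}}>0$ we delete $e$ and $G'\setminus e$ manifestly still has a length-$\ell$ system, while if $C_{\bar{e}}=0$ we keep $e$ and the graph is unchanged. Either way the invariant is preserved, so the final $H$ contains a length-$\ell$ system $S$; this gives feasibility.

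Next I would prove minimality by showing $S=E(H)$. The key structural observation is monotonicity: because the procedure only ever deletes edges, $H$ is a subgraph of every intermediate graph, and a set of $k$ vertex-disjoint terminal paths of total length $\ell$ using only the edges of a subgraph is still such a system in any supergraph. Now take any edge $e\in E(H)$. At the moment $e$ was processed, the current graph $G'_e\supseteq H$ satisfied $C_{\bar{e}}=0$, i.e. \emph{every} length-$\ell$ system of $G'_e$ uses $e$. Since $H\subseteq G'_e$, every length-$\ell$ system of $H$ is in particular a length-$\ell$ system of $G'_e$, hence uses $e$. As this holds for all $e\in E(H)$, the feasible system $S\subseteq E(H)$ must contain every edge of $H$, so $E(H)\subseteq S\subseteq E(H)$ and therefore $S=E(H)$. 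Thus $E(H)$ is exactly the edge set of a shortest $k$-disjoint path system, which is the claim.

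The main obstacle is the subtlety already flagged: deleting an edge that lies on \emph{all} length-$\ell$ systems raises the minimum solution length of the reduced graph, so a naive oracle reporting ``the number of shortest systems of $G'\setminus e$'' would return a positive count of these longer systems and wrongly delete a needed edge. The fix is to hold the target length $\ell$ fixed and query the coefficient of $x^{\ell}$; because edge deletion never creates new systems nor shortens existing ones, the length-$\ell$ systems of any subgraph are exactly the length-$\ell$ systems of $G$ supported on that subgraph, which is precisely the monotonicity the minimality argument relies on. With $\ell$ fixed, the counting oracle is invoked once per edge, so the whole sweep performs $O(m)$ oracle calls and remains polynomial.
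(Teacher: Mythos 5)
Your proof is correct and follows essentially the same greedy argument as the paper: feasibility is maintained as an invariant through the sweep, and minimality follows because $H$ is a subgraph of the graph current when any retained edge $e$ was tested, so every surviving length-$\ell$ system must use $e$. The one point where you go beyond the paper's (very terse) proof is in pinning down $C_{\bar e}$ as the coefficient of $x^{\ell}$ for the \emph{original} optimum $\ell$ rather than the least-degree coefficient of the reduced graph's polynomial; this is a real subtlety the paper leaves implicit, and your reading is the one that makes the argument sound.
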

\begin{proof}
It is easy to see that all the edges in $H$ are
part of a shortest $k$-disjoint path. To see that all the 
edges are part of a single shortest $k$-disjoint paths since 
otherwise we could remove that edge, say $e^*$ and will have 
$C_{\bar{e^*}} > 0$ in $H$ and therefore also in the graph $G$
at the time $e^*$ was under consideration contradicting that
$e^*$ was retained.
Since for each edge we spend $O(4^kn^{\omega+1})$ time, the total search time is
$O(4^kn^{\omega+2})$.
\end{proof}
\begin{remark}\label{rem:weighted}
Our algorithm also works for weighted graphs 
where each edge $e$ is assigned a weight $w(e)$ which is 
polynomially bounded in the number of vertices. 
This can be done by putting odd (additive) weights 
$w'(e) = (|E|+1)w(e) + 1$
on the edges i.e. replacing the entry corresponding to $e$ in the
adjacency matrix by $x^{w'(e)}$ instead of just $x$.
Notice that the length of a collection of edges has the same parity
as the sum of its weights. So the calculation in Lemma~\ref{lem:shortGood} go 
through with small changes. This implies that we do not have
to convert a weighted graph into unweighted one in order to run the
counting algorithms and we get the sum of the (additive) weights of
edges instead of counts as a result.
\end{remark}
\subsection{Proof of Main Theorem: Search in \RNC}\label{sub:RNC}
For the construction of shortest $k$-DPP we use the following Isolation lemma introduced by Mulmuley, Vazirani, and Vazirani \cite{MVV}. It is a simple but powerful lemma that crucially uses randomness:
\begin{lemma}[Isolation Lemma] \label{lem:Iso}\sm{Unused}
Given a non-empty $\mathcal{F} \subseteq 2^{[m]},$ if one assigns for each $i \in [m],$ $w_i \in [2m]$ uniformly at random then with probability at least half, the minimum weight subset of in $\mathcal{F}$ is unique; where the weight of a subset $S$ is $\sum_{i \in S} w_i.$
\end{lemma}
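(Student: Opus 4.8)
The plan is to bound the probability that the minimum-weight set fails to be unique, rather than arguing uniqueness directly. First I would introduce the key notion: call an element $i \in [m]$ \emph{ambiguous} if there exist two minimum-weight sets $S, T \in \mathcal{F}$ with $i \in S$ but $i \notin T$. The first step is the easy observation that non-uniqueness forces some element to be ambiguous: if $S \neq T$ are both of minimum weight, their symmetric difference $S \triangle T$ is nonempty, and any element of it witnesses ambiguity by definition. Hence it will suffice to bound $\Pr[\text{some } i \text{ is ambiguous}]$.

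The crux of the argument will be a per-element probability bound, and the trick I would use is to expose the randomness of $w_i$ last (a ``deferred decision'' for $w_i$). Fixing an element $i$, I would condition on all the other weights $\{w_j\}_{j \neq i}$ and leave only $w_i$ random. Define $\alpha = \min\{ w(S) : S \in \mathcal{F}, i \notin S\}$, the least weight achievable without $i$, and $\beta = \min\{ \sum_{j \in S, j \neq i} w_j : S \in \mathcal{F}, i \in S\}$, the least weight achievable using $i$ but discounting $i$'s own contribution. Both $\alpha$ and $\beta$ depend only on the fixed weights $\{w_j\}_{j \neq i}$ and are therefore independent of $w_i$. The minimum weight among sets containing $i$ equals $\beta + w_i$, while the minimum among sets avoiding $i$ equals $\alpha$. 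Thus $i$ lies in some minimum-weight set exactly when $\beta + w_i \le \alpha$, and lies outside some minimum-weight set exactly when $\alpha \le \beta + w_i$; so $i$ is ambiguous precisely when both hold, i.e. when $w_i = \alpha - \beta$. This threshold $t_i := \alpha - \beta$ is a single fixed value determined entirely by the other weights.

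With this in hand the rest is immediate. Since $w_i$ is drawn uniformly and independently from the $2m$-element set $[2m]$, and $t_i$ is independent of $w_i$, I would conclude $\Pr[i \text{ ambiguous}] \le \Pr[w_i = t_i] \le \frac{1}{2m}$ (the probability being $0$ when $t_i \notin [2m]$). A union bound over the $m$ elements then yields $\Pr[\text{some } i \text{ ambiguous}] \le m \cdot \frac{1}{2m} = \frac12$, so with probability at least $\frac12$ no element is ambiguous and, by the first step, the minimum-weight set is unique.

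I expect the main obstacle to be the second paragraph: setting up $\alpha$ and $\beta$ so that the threshold $t_i$ is genuinely independent of $w_i$. The subtlety is that $\beta$ must \emph{exclude} $i$'s own weight, so that comparing $\beta + w_i$ against $\alpha$ isolates the single value of $w_i$ at which ambiguity can occur; getting this discounting right is what makes the clean per-element bound $\frac{1}{2m}$ go through, and everything surrounding it is just the symmetric-difference observation and a union bound.
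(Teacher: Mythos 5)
Your proof is correct and is exactly the standard argument of Mulmuley--Vazirani--Vazirani: the paper does not prove Lemma~\ref{lem:Iso} at all but simply cites \cite{MVV}, and your ``ambiguous element'' / deferred-decision argument with the threshold $t_i = \alpha - \beta$ and the union bound is the canonical proof from that source. The only point worth a half-sentence of care is the degenerate case where every set in $\mathcal{F}$ contains $i$ (or none does), so that $\alpha$ or $\beta$ is a minimum over an empty family; your parenthetical that the probability is $0$ when $t_i \notin [2m]$ already disposes of it.
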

\begin{lemma}\label{lem:RNC}
Construction of a solution to the shortest One-face and Two-face Parallel $k$-DPP is in \RNC.
\end{lemma}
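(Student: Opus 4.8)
The plan is to use the Isolation Lemma (Lemma~\ref{lem:Iso}) to reduce the \emph{search} problem to the \emph{counting} problem that we already know how to solve via determinants. The central idea is that our counting machinery from Theorem~\ref{thm:oldMain} computes a signed sum of determinants whose least-order term (in $x$) records the \emph{number} of shortest good cycle covers. If we could guarantee that there is a \emph{unique} shortest solution, then the coefficient extraction would tell us not just that a solution exists but would let us recover it edge-by-edge in parallel. The role of randomness is precisely to force this uniqueness.

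First I would set up the weighting. Let $m'$ be the number of edges of the modified graph $G'$ (respectively $G_T$), and let $\mathcal{F} \subseteq 2^{[m']}$ be the family of edge sets of shortest $k$-disjoint path routings (equivalently, via Lemma~\ref{lem:bij} or the Cancellation Lemma~\ref{lem:cancel}, the good shortest cycle covers). Assign to each edge $e$ an independent random weight $w_e \in [2m']$. By the Isolation Lemma, with probability at least $\tfrac12$ the minimum-weight member of $\mathcal{F}$ is unique. I would fold these isolating weights into the already-present length variable $x$ exactly as in Remark~\ref{rem:weighted}: replace the adjacency-matrix entry for edge $e$ by $x^{w'(e)}$ where $w'(e)=(m'+1)\,w_e+1$ keeps the length-parity intact, so the sign computation of Lemma~\ref{lem:shortGood} still goes through and all shortest good cycle covers of a given total weight share a common sign. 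Then the least-degree good monomial in the signed sum of determinants has coefficient equal to the number of minimum-weight good cycle covers, which the Isolation Lemma makes equal to $1$ with constant probability.

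Given isolation, the extraction of the unique solution is the routine part and is done in parallel. For each edge $e$, I would compute (by the \NCt\ counting procedure of Theorem~\ref{thm:oldMain}, applied to the graph with $e$ deleted) whether $e$ lies in the unique isolated solution: $e$ belongs to it exactly when deleting $e$ strictly increases the minimum weight of a good cycle cover, i.e. when the isolated minimum weight is no longer achieved in $G'\setminus e$. All $m'$ of these queries are independent and can be run simultaneously, giving an overall \RNC\ bound since each query is in \NCt\ and the Isolation Lemma succeeds with probability at least $\tfrac12$ (boosted by independent repetition). For the Two-face parallel case the identical argument applies, using Theorem~\ref{thm:equiLenM} to guarantee that only the pure cycle covers of $M$ survive and the modular-exponent extraction of Remark~\ref{rem:twofaceseq} in place of the one-face least-degree extraction.

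The main obstacle I anticipate is ensuring that the isolating weights interact cleanly with the \emph{signed} determinant sum rather than with a raw count of cycle covers. Unlike a single determinant where isolation guarantees one surviving monomial, here the relevant coefficient is obtained only \emph{after} the inclusion-exclusion cancellation of Lemma~\ref{lem:cancel} (one-face) or the winding-number pruning (two-face). I must therefore check that the isolating weights are applied to the \emph{original} edges of $G$ (not to the weight-$0$ demand edges or the axis edges carrying the indeterminate $y$), so that they perturb only the routings and not the combinatorial structure that drives the cancellation; with that care, the cancellation identity is weight-preserving and the isolated term survives uniquely, so the per-edge membership test reads off the unique solution correctly.
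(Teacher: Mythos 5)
Your overall strategy---random edge weights, the Isolation Lemma applied to the family of shortest good cycle covers, and a parallel per-edge membership test via the counting oracle on $G\setminus e$---is exactly the route the paper takes, and your closing caution about applying the perturbation only to the original edges of $G$ (not to the demand or axis edges) is correct and consistent with it. However, the concrete weight scheme you propose breaks the argument. You set $w'(e)=(m'+1)w_e+1$ with $w_e\in[2m']$ random, so a cycle cover using edge set $S$ contributes a monomial of degree $(m'+1)\sum_{e\in S}w_e+|S|$. Here the random part sits in the \emph{high}-order position and the length $|S|$ only breaks ties among covers of equal total random weight. The least-degree monomial of the signed sum of determinants then picks out the good cycle cover of minimum \emph{random} weight over \emph{all} good cycle covers, which need not be a shortest one: your family $\mathcal{F}$ of shortest routings is indeed isolated, but the determinant cannot see $\mathcal{F}$ separately from longer routings that may now have strictly smaller degree. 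The formula you borrowed from Remark~\ref{rem:weighted} is designed for the opposite situation, where $w(e)$ is itself the objective to be minimized.

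The fix is the one the paper uses: put the unit length in the high-order bits and the random perturbation in the low-order bits, e.g.\ give edge $e$ weight $4n^2+r_e$ with $r_e$ small and random, so that a cover with edge set $S$ has degree $4n^2|S|+\sum_{e\in S}r_e$ and the total perturbation over any cover stays below the scale factor, hence can never overturn a strict length comparison. With that correction the rest of your argument---common sign of the surviving least-degree terms, coefficient $\pm1$ under isolation, parallel computation of $C_{\bar{e}}$ for every edge, and the two-face variant via the modular-exponent extraction---matches the paper's proof. Two further points from the paper worth incorporating: a failed isolation is detected deterministically by checking that the returned edge set actually forms a $k$-disjoint path (rather than relying only on repetition), and the same weights yield an $O(4^k n^{\omega})$ randomised \emph{sequential} algorithm by reading all the $C_{\bar{e}}$ off the cofactors of a single matrix inversion.
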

\begin{proof}
First we introduce small random weights in the lower order bits of the
edges of the graph $G$ (i.e. give weights like $4n^2 + r_e$ to edge $e$).
Using Lemma~\ref{lem:Iso} these are isolating for the set of $k$-disjoint paths
between the designated vertices, with high probability. In other words
the coefficient of least degree monomial equals $\pm 1$ in the isolating 
case. At the same time the ordering of unequal weight paths is preserved.
This is because the sum of the lower order bits cannot interfere with the
higher order bits of the monomial which represent the length of the 
corresponding $k$-disjoint path.

Let the monomial with minimum exponent be $x^w$. Our counting algorithms 
works for the weighted case as explained in 
the remark in Subsection~\ref{sub:ptime} above. Thus borrowing notation from 
Subsection~\ref{sub:ptime} we can compute $C_{\bar{e}}$ in parallel for 
each edge under the small random weights above. If the weight is indeed
isolating, we will obtain the least degree monomial in $C_{\bar{e}}$ will
be $x^w$ exactly when $e$ does not belong to the isolated shortest $k$-disjoint paths. 
Thus with probability at least half we will obtain a set of shortest $k$-disjoint
paths. When the assignment is not isolating the set of edges which lie on some
shortest $k$-disjoint path will not form a $k$-disjoint path itself so we
will know for sure that the random assignment was not isolating.

We can also give a randomised sequential algorithm for the problem running in
time $O(4^kn^{\omega})$ which uses the idea of inverting a matrix in order to
find a witness for perfect matching described in \cite{MVV}. They use it in the
parallel setting but we apply it in the sequential case also. Essentially we
need to compute all the $O(n)$ many $C_{\bar{e}}$'s in $O(n^\omega)$ time.
Notice that $C - C_{\bar{e}}$ will be precisely the weighted count for the
$k$-disjoint paths that contain the edge $e$. This is precisely the co-factor
of the entry $(u,v)$ where $e = (u,v)$ and since all co-factors can be
computed in $O(n^\omega)$ time we are done.
\end{proof}
\section{Edge disjoint paths}\label{sec:edpp}
\begin{figure}
\centering
  \subcaptionbox{(1): Arbitrary degree to degree $\le 4$, (2): degree $4$ to degree $3$}
  {\includegraphics[width=.4\linewidth]{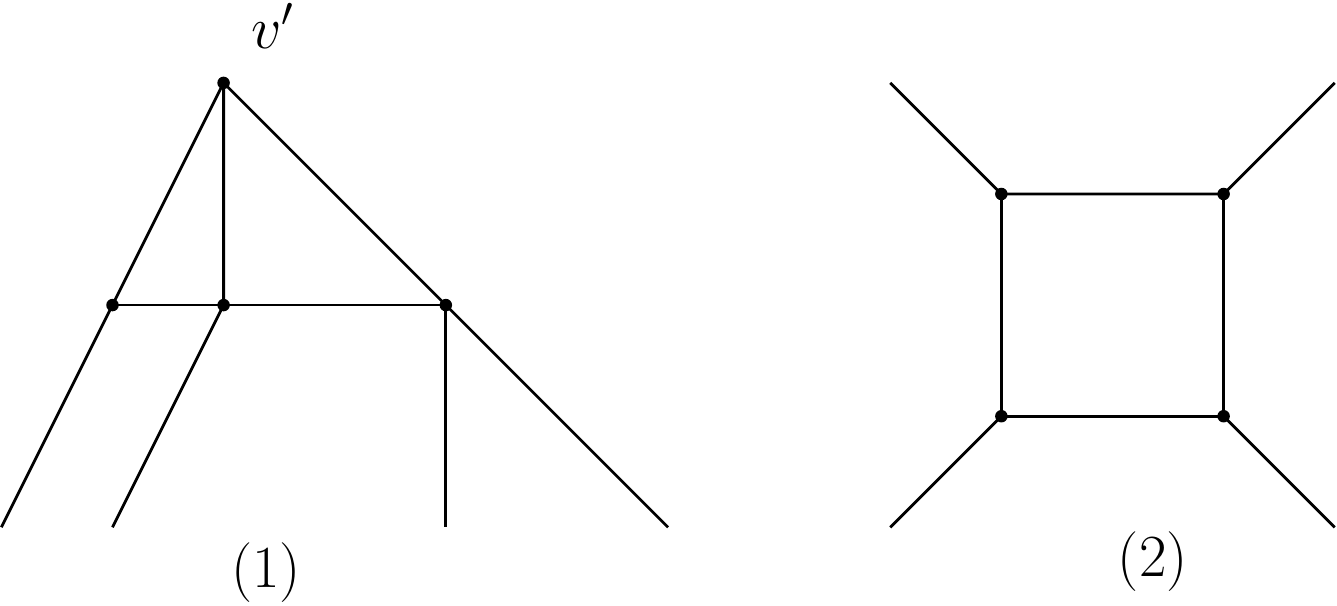}}
  \hspace{5mm}
  \subcaptionbox{Terminal degree Reduction}
  {\includegraphics[width=.4\linewidth]{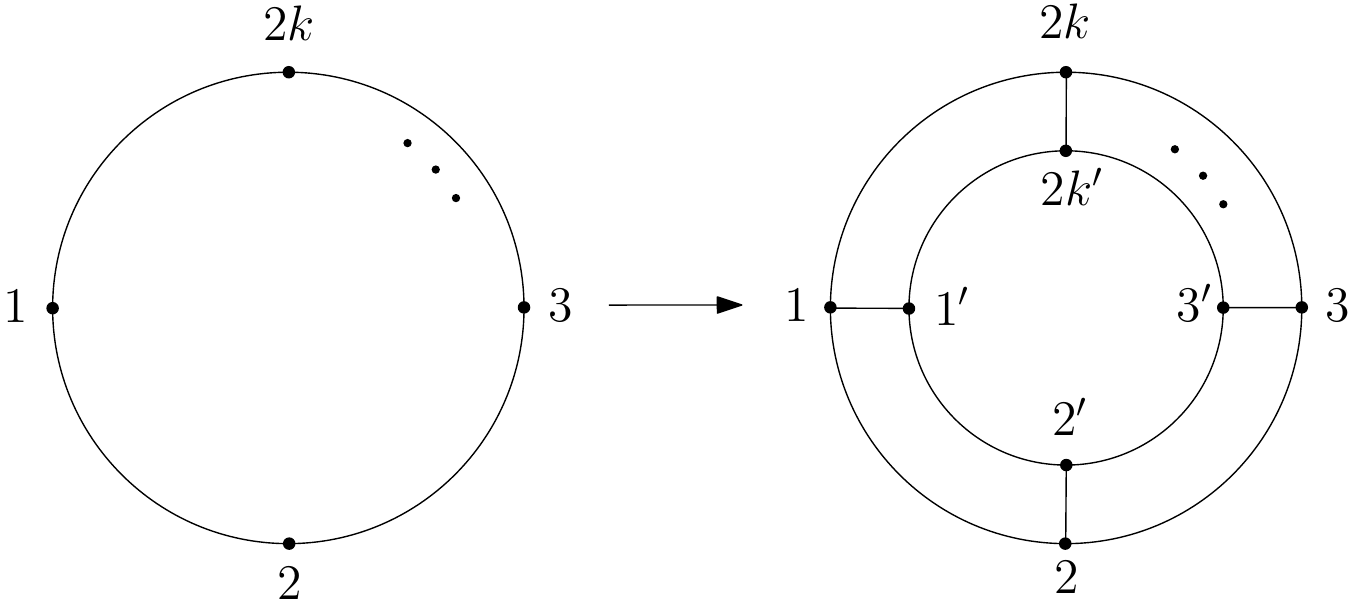}}
  \caption{Degree Reduction Gadgets}\label{fig:edg}
\end{figure}
We define \textit{planar} $k$-EDPP to be the problem of finding $k$ edge disjoint paths in a planar graph $G$ between terminal pairs when, the demand edges can be \textit{embedded} in $G$ such that planarity is preserved. 
We show how to transfer results for $k$ vertex disjoint paths to $k$ edge disjoint paths in undirected graphs using
gadgets in Figure~\ref{fig:edg} borrowed from \cite{MiddendorfP93}.
\begin{lemma}~\label{lem:reduction}
Decision and Search for One-Face \textit{planar} $k$-EDPP reduces to One-Face $k$-DPP. 
\end{lemma}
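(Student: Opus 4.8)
The plan is to exhibit a reduction that takes a One-Face \textit{planar} $k$-EDPP instance and produces a One-Face $k$-DPP instance whose solutions are in weight-preserving correspondence with the original. The guiding principle is the classical line-graph-style transformation: edge-disjointness in $G$ becomes vertex-disjointness after we split each vertex into copies, one per incident edge, so that two paths sharing a vertex of $G$ are forced to share an edge, which is forbidden. However, the naive line-graph construction destroys planarity and can blow up the degree, so the first step is to normalise $G$ using the degree-reduction gadgets of Figure~\ref{fig:edg} (borrowed from \cite{MiddendorfP93}): gadget (1) reduces vertices of arbitrary degree to degree at most $4$, and gadget (2) further reduces degree-$4$ vertices to degree $3$. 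Crucially I would verify that both gadgets preserve planarity and, since the \emph{planar} $k$-EDPP hypothesis lets me embed the demand edges inside $G$, that the terminals remain on a single face after gadget insertion.

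First I would apply the degree-reduction gadgets so that every internal vertex of $G$ has degree $3$ (terminals handled separately by the terminal degree-reduction gadget on the right of Figure~\ref{fig:edg}, which attaches each $(s_i,t_i)$ to a unique pendant so that the terminal itself becomes degree-one and cannot be reused). Second, at each remaining degree-$3$ vertex I would insert a small planar vertex-splitting gadget that replaces the vertex by a little planar subgraph whose internal structure permits at most two of the three incident edges to be used by a single vertex-disjoint path and never the same edge twice; this is where edge-disjointness in the original graph is encoded as vertex-disjointness in the new one. The key point to check is that for a degree-$3$ vertex any pair of edge-disjoint paths passing through it corresponds to exactly one way of routing two vertex-disjoint paths through the gadget, so the correspondence is a bijection and preserves the number of edges used (hence total length, up to the fixed additive contribution of the gadgets).

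The proof would then assemble these pieces: given the reduced, degree-$3$ graph $G'$ I build $G''$ by splitting vertices, argue $G''$ is planar with all terminals on one face, and show a length-preserving bijection between $k$-tuples of edge-disjoint paths in $G$ and $k$-tuples of vertex-disjoint paths in $G''$. Decision follows since a length-$w$ edge-disjoint solution exists in $G$ iff a corresponding vertex-disjoint solution of the shifted length exists in $G''$, and search follows by reading off the edges of $G$ used from the vertices of $G''$ visited by the One-Face $k$-DPP solution. I would finally note that each gadget is of constant size, so $G''$ has $O(|V|+|E|)=O(n)$ vertices and the reduction is computable within the resource bounds we care about (indeed in \Log, preserving the parallel claims).

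The main obstacle I anticipate is the \emph{one-face} bookkeeping: the degree-reduction and vertex-splitting gadgets must be inserted so that the cyclic order of the terminals on the outer face is preserved and no gadget pushes a terminal off that face. Guaranteeing this requires using the planarity of the demand edges (the defining hypothesis of \emph{planar} $k$-EDPP) to route the gadget edges locally and consistently with the given embedding, and verifying that the bijection does not silently allow a path to ``cut a corner'' through a split vertex in a way that reuses an edge of $G$. Establishing that each gadget admits exactly the intended routings — no more, no less — is the crux, and is precisely where I would spend the bulk of the argument; the global planarity and length-preservation claims then follow routinely by composing the local gadget guarantees.
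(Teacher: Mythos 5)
Your proposal follows the paper's general strategy (the degree-reduction gadgets of Figure~\ref{fig:edg}), but it misses the one observation that actually closes the argument and replaces it with an undefined construction. After the degree of every non-terminal vertex is reduced to three, \emph{no further gadget is needed}: in a graph of maximum degree three, two paths are edge-disjoint if and only if they are vertex-disjoint, because a vertex of degree three cannot be an internal vertex of two edge-disjoint paths (that would require four distinct incident edges). Your ``small planar vertex-splitting gadget'' at degree-$3$ vertices is therefore solving a non-problem, and since you never specify it, the step you yourself identify as ``the crux'' is left entirely open. Relatedly, you assert that the correspondence between edge-disjoint solutions in $G$ and vertex-disjoint solutions in $G''$ is a \emph{bijection}; this is false, and the paper explicitly notes that counts are not preserved by the gadget reduction (which is exactly why the lemma is stated only for decision and search, not counting).

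The second genuine gap is at the degree-$4$-to-degree-$3$ step. Two edge-disjoint paths may pass through a degree-$4$ vertex in a \emph{crossing} configuration, and the planar gadget of Figure~\ref{fig:edg}(a)(2) only accommodates the non-crossing routing; one must argue that a shortest edge-disjoint solution can always be rerouted (by exchanging path segments at crossings) into one in which such crossings are eliminated. This is where the \emph{planar} hypothesis on the demand pattern enters: because the demand edges embed inside the designated face, any two paths of a solution cross an even number of times and can be uncrossed without changing the multiset of edges used, hence without changing the total length. Your discussion of planarity is confined to preserving the cyclic order of terminals on the face, and does not address this uncrossing argument, which is the one substantive point the reduction has to make. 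Finally, length preservation is handled in the paper not ``up to a fixed additive contribution'' but exactly, by giving the gadget edges zero additive weight (omitting the indeterminate $x$ on them); your shifted-length bookkeeping would need to be made uniform over all solutions, which is not automatic since different solutions traverse different numbers of gadget edges.
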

\begin{proof} (Sketch) The reduction is performed in three steps. First 
    we reduce the degrees of terminals by using the gadget in 
    Figure~\ref{fig:edg}(b)\sm{make this figure} to at most three. 
    Next, we use the gadget in Figure~\ref{fig:edg}(a)(1) to reduce the 
    degree of any vertex which is not a terminal to at most four. After 
    each application of this gadget the degree of the vertex reduces by one.
    A parallel implementation of this procedure would first expand every
    vertex into an, at most ternary tree and then replace each node by the 
    gadget. We then reduce the degrees to at most three by using the 
    gadget in Figure~\ref{fig:edg}(a)(2). Notice that since the demand edges can
    be embedded in a planar manner on the designated face, the disjoint paths 
    can only cross each other an even number of times and hence the for every shortest EDPP
    we will always be able to find a corresponding shortest DPP after using the gadget in
    Figure~\ref{fig:edg}(a)(2). It must also be noted that path lengths will not be preserved, 
    however, we can give any new edges 
    introduced in the gadgets zero additive weight. This can be 
    achieved by simply not weighing the new edges by the indeterminate $x$ 
    in the graph modification step.
    Finally, observe that two paths in a graph with maximum degree three are vertex 
    disjoint iff they are edge disjoint.
\end{proof}
\begin{remark}
    Since counts are not preserved in the gadget reduction, we do not have 
    an \NC-bound for counting $k$-EDPP's. 
\end{remark} 
\section{Conclusion and Open Ends}\label{sec:open}
We have reduced some planar versions of the shortest $k$-DPP to computing determinants. 
This is a new technique for this problem as far as we know 
and has the advantage of being simple and parallelisable while remaining sequentially competitive.

Is it possible to solve the Two-face case with an arbitrary distribution of the demand edges while obtaining similar complexity bounds? The more general question of extending our result to the case when the terminals are on some fixed $f$ many faces also remains open.  
For the One-face case, can we make the dependence on $k$ from exponential to polynomial or even quasipolynomial? 
Also, what about extending our result to planar graphs or even $K_{3,3}$-free or $K_5$ free graphs or to graphs on surfaces. Can one de-randomize our algorithm to get deterministic \NC\ bound for the construction? It will be interesting if one can show lower bounds or hardness results for these problems.
\section*{Acknowledgements}
The first and the fourth authors were partially funded by a grant from Infosys foundation. The second author was an intern at the Chennai Mathematical Institute under the supervision of the first author while some part of this work was done. The fourth author was partially supported by a TCS PhD fellowship.
\bibliography{bibfile}
\appendix
\section{Proof of Lemma~\ref{lem:intuitive}}
\begin{proof}
Recall, the surface on which $C_1,C_2$ are drawn is given by
\[
\sigma = \{(r,\theta): 1 \leq r\leq 2,0\leq\theta\leq 2\pi\}
\]
We quote from \cite{RobertsonS86a}. If $P$ is a path drawn on $\sigma$ with one end in $C_1$ and the other
in $C_2$, let $f: [0,1] \rightarrow \sigma$ be a continuous injection with image $P$ and with
$f(0) \in C_1$, $f(1) \in C_2$; then we define $\theta(P) = \theta(f)$. It is easy to see that this
definition is independent of the choice of $f$. If $P_1, P_2$ are both paths drawn
on $\sigma$ from some $s \in C_1$, to some $t \in C_2$, then $\theta(P_1) - \theta(P_2)$ is an integer, and
is zero if and only if $P_1$ is homotopic to $P_2$. Let $k > 0$ be some fixed integer, and let 
\[
M_i = \{(i,\frac{2j}{k}\pi):1 \leq j \leq k \} (i=1,2).
\]
If $\mathcal{L}$ is a linkage then clearly $\theta(P)$ is constant for $P \in \mathcal{L}$, and we
denote this common value by $\theta(\mathcal{L})$. \\
Intuitively, this is because if any 2 simple paths wind around a face a different number of times then they both must intersect.
\end{proof}
\end{document}